\documentclass[a4paper,UKenglish,autoref,thm-restate]{lipics-v2021}

\nolinenumbers

\bibliographystyle{plainurl}

\title{Reducing Quantum Circuit Synthesis to \#SAT}

\author{Dekel Zak}{Leiden University, The Netherlands}{d.z.zak@umail.leidenuniv.nl}{https://orcid.org/0009-0001-0315-1539}{}
\author{Jingyi Mei}{Leiden University, The Netherlands}{j.mei@liacs.leidenuniv.nl}{https://orcid.org/0000-0002-4665-9818}{}
\author{Jean-Marie Lagniez}{Artois University, France}{lagniez@cril.fr}{https://orcid.org/0000-0002-6557-4115}{}
\author{Alfons Laarman}{Leiden University, The Netherlands}{a.w.laarman@liacs.leidenuniv.nl}{https://orcid.org/0000-0002-2433-4174}{}

\authorrunning{D. Zak, J. Mei, J.M. Lagniez \&  A. Laarman} 

\Copyright{Dekel Zak, Jingyi Mei, Jean-Marie Lagniez \& Alfons Laarman} 

\ccsdesc[500]{Theory of computation~Constraint and logic programming}
\ccsdesc[500]{Theory of computation~Quantum computation theory}

\keywords{Maximum weighted model counting, quantum circuit synthesis} 

\category{} 

\relatedversion{} 

\supplement{The source code to the tool hosted on a GitHub repository}\supplementdetails[
subcategory={Software, Source Code},
]{Software}{https://github.com/System-Verification-Lab/Quokka-Sharp}

\funding{This work is supported by the Dutch National Growth Fund, as part of the Quantum Delta NL program.}

\EventEditors{Maria Garcia de la Banda}
\EventNoEds{1}
\EventLongTitle{31st International Conference on Principles and Practice of Constraint Programming (CP 2025)}
\EventShortTitle{CP 2025}
\EventAcronym{CP}
\EventYear{2025}
\EventDate{August 10--15, 2025}
\EventLocation{Glasgow, Scotland}
\EventLogo{}
\SeriesVolume{340}
\ArticleNo{39}

\RequirePackage{color}

\RequirePackage{physics}
\RequirePackage{graphicx}
\RequirePackage{amsmath}

\RequirePackage{braket}

\RequirePackage{amsthm}
\RequirePackage{thmtools}
\RequirePackage{xspace}
\RequirePackage{multirow}
\usepackage{colortbl}
\RequirePackage{stmaryrd}
\usepackage{appendix}
\RequirePackage{calc}
\RequirePackage{nicefrac}
\RequirePackage{complexity}
\RequirePackage{wrapfig}

\usepackage{adjustbox}
\usepackage{array}
\usepackage{booktabs}
\usepackage{multirow}
\usepackage{makecell}
\usepackage{empheq}

\usepackage{tabularx}
\usepackage{etoolbox}
\usepackage{arydshln}
\usepackage{mathtools}
\usepackage{thmtools}
\usepackage{xparse}
\let\paragraph\relax
\usepackage{etoolbox}

\usepackage{pifont}
\interfootnotelinepenalty=10000
\RequirePackage[bottom,symbol]{footmisc} 
\RequirePackage{footnote}

\setlength{\parskip}{7pt}
\setlength{\parindent}{0pt}
\usepackage{bm}

\renewcommand{\dagger}{\ensuremath{\text{\textdagger}}} 

\newcommand\defmath[2]{\newcommand#1{\ensuremath{#2}\xspace}}
\newcommand\concept[1]{\textit{#1}}

\newcommand\smat[1]{\ensuremath{\begin{smallmat} #1 \end{smallmat}}\xspace}

\newcommand\mat[1]{\ensuremath{\begin{bmatrix*}[r]#1\end{bmatrix*}}\xspace}

\renewcommand\dots{\makebox[.7em][c]{.\hfil.\hfil.}}

\newtheorem{problem}[theorem]{Problem}

\renewcommand\phi{\varphi}

\let\set\undefined

\providecommand{\set}[1]{\ensuremath{\left\lbrace #1 \right\rbrace}}
\providecommand{\sequence}[1]{\ensuremath{\left( #1 \right)}}

\providecommand{\abs}[1]{\lvert#1\rvert}

\newcommand{\defn}{\,\triangleq\,}

\newenvironment{smallmat}{\left[\begin{smallmatrix}}{\end{smallmatrix}\right]}

\newcommand\hide[1]{}

\defmath\samp{\textbf{Sample}}
\defmath\pro{\textbf{Measure}}
\defmath\eq{\textbf{Equal}}
\defmath\res{\textbf{Res}}
\defmath\addi{\textbf{Addition}}
\defmath\inprod{\textbf{InnerProd}}
\defmath\fid{\textbf{Fidelity}}
\defmath\had{\textbf{Hadamard}}
\defmath\xyz{\textbf{X,Y,Z}}
\defmath\cx{\textbf{CX}}
\defmath\cz{\textbf{CZ}}
\defmath\loc{\textbf{Local}}
\defmath\T{\textbf{T}}

\makeatletter
\DeclareRobustCommand{\qed}{\ifmmode
    \eqno \def\@badmath{$$}\let\eqno\relax \let\leqno\relax \let\veqno\relax
    \hbox{\openbox}\else
    \leavevmode\unskip\penalty9999 \hbox{}\nobreak\hfill
    \quad\hbox{\openbox}\fi
}
\makeatother

\newcommand{\circuit}{\mathcal{C}}
\newcommand{\gateset}{\mathcal{G}}
\newcommand{\jfid}{\text{Jamio\l{}kowski fidelity}}
\newcommand{\pauli}{\mathcal{P}}
\defmath\Rot{\mathit{Rot}}

\defmath\init{\mathit{init}}
\defmath{\qg}{\mathbf{G}}
\newcommand{\hh}{\mathcal{H}}
\newcommand{\Fid}{\mathrm{Fid}}
\defmath{\combasis}{\mathbf{CB}}
\defmath{\paulibasis}{\mathbf{PB}}

\defmath{\toffoli}{\mathit{Toffoli}}

\defmath{\mitms}{\textbf{mitms}}

\defmath\q{{\vec q}}
\defmath\aux{{\vec u}}

\defmath\sat{SAT}
\defmath\mc{\#SAT}
\defmath\wmc{\#SAT_W}
\defmath\WeightedMax{\mathit{WeightedMax}}
\defmath\Max{\mathit{Max}}
\defmath\maxmc{\Max\mc}
\defmath\maxwmc{\Max\wmc}

\newcommand{\aref}[1]{\hyperref[#1]{Appendix~\ref*{#1}}}

\defmath\true{1}
\defmath\false{0}

\newcommand\no[1]{\ensuremath{\overline{#1}}}

\begin{document}

\renewcommand\subsectionautorefname{Section}
\renewcommand\subsubsectionautorefname{Section}

\maketitle

\begin{abstract}
    Quantum circuit synthesis is the task of decomposing a given quantum operator into a sequence of elementary quantum gates. 
    Since the finite target gate set cannot exactly implement any given operator, approximation is often necessary.
    Model counting, or \#SAT, has recently been demonstrated as a promising new approach for tackling core problems in quantum circuit analysis.
    In this work, we show for the first time that the universal quantum circuit synthesis problem can be reduced to maximum model counting.
    We formulate a \#SAT encoding for exact and approximate depth-optimal quantum circuit synthesis into the Clifford+T gate set.
    We evaluate our method with an open-source implementation that uses the maximum model counter d4Max as a backend. For this purpose, we extended d4Max with support for complex and negative weights to represent amplitudes.
    Experimental results show that existing classical tools have potential for the quantum circuit synthesis problem.
\end{abstract}

\section{Introduction}

Quantum algorithms are typically specified using higher-level constructs such as (classical) procedures and quantum Fourier transform (QFT)~\cite{kliuchnikov2016practical}.
For their efficient implementation on physical devices, which typically operate with a finite gate set, they need to be broken down into a quantum circuit, a sequence of quantum gates.
Error-corrected hardware implementations, which realize the ideal quantum computational model formalized in a quantum Turing machine~\cite{bernstein1993quantum}, 
often use the universal Clifford+T set $\{S, H, CX, T\}$~\cite{fowler2012surface,aliferis2006quantum,nielsen2000quantum}.
Moreover, better optimal synthesis solutions can have significant implications for quantum complexity theory, such as reducing the stabilizer rank of magic gadgets~\cite{kocia2020improved}, and relating classical and quantum resources~\cite{bravyi2019simulation,bravyi2016trading}.
However, exact synthesis is not always feasible due to the discrete nature of this gate set, often necessitating approximation techniques to balance precision and circuit depth~\cite{kliuchnikov2016practical}. The computational complexity of synthesis is formidable, with classical approaches exhibiting doubly exponential time in the worst case~\cite{kuncak2010complete}, underscoring the need for scalable and efficient methods.

Classical methods, like decision diagram~\cite{limdd,burgholzer2020advanced,sistla2023symbolic}, tree-automata~\cite{chen2023automata}, and SAT~\cite{schneider2023sat} have proven to be highly effective in analyzing quantum circuits.
Recent advances have highlighted weighted model counting, or \#SAT, as a powerful tool for addressing hard problems in quantum circuit analysis, including simulation~\cite{mei2024simulating} and equivalence checking~\cite{mei2024eq}. These methods leverage off-the-shelf solvers to tackle the exact versions of these problems, known as \#\P-complete~\cite{nest2008classical}.
Building on this promise, we explore whether \#SAT can be harnessed for the universal quantum circuit synthesis problem ---a challenge that also requires approximation since discrete universal gate sets cannot exactly implement arbitrary quantum operators. The relevant complexity class here is \QMA{}, as approximate circuit equivalence checking is complete for it~\cite{ji2009non}. 
Moreover, the Solovay-Kitaev theorem~\cite{dawson2005solovaykitaev} guarantees that any unitary can be $\epsilon$-approximated with a $\polylog(\frac1\epsilon)$-depth circuit.

This work presents a novel reduction of quantum circuit synthesis to maximum model counting, focusing on depth-optimal and approximate synthesis into the Clifford+T gate set.
To achieve this, we overcome two obstacles.
First, to enable reduction to the maximum weighted model counting, we show how equivalence checking can be performed with a single model counting call, contrasting previous approaches~\cite{mei2024eq} which required linearly many calls. In the process, we provide two new ways of encoding equivalence checking. Second, we show how the fidelity (a measure of circuit similarity) between two circuits can be computed by weighted model counting, generalizing our exact equivalence checking methods to support approximate equivalence checking.
Our approach then encodes the synthesis task as a weighted conjunctive normal form (CNF) formula, where satisfying assignments correspond to valid gate sequences and weights reflect approximation fidelity. We demonstrate that this reduction enables both exact and approximate synthesis, with applications to circuit optimization. We provide an open-source implementation, called \texttt{Quokka\#-syn}, to validate our method.

The scalability of quantum circuit synthesis remains a critical bottleneck, with existing methods struggling to handle large qubit counts or gate depths without resorting to corner-case omissions~\cite{matteo2016parallel,paradis2024synthetiq}.
We experimentally evaluate our method to compare the different encodings and test its scalability. In addition, we compare it against a state-of-the-art tool focusing on depth-optimal approximate and exact synthesis.
While the comparison has limitations, it shows that our \#SAT-based approach has merit and can offer improvement over existing methods. However, it falls short of fully resolving the synthesis problem’s inherent complexity. We identify multiple avenues for improvement, such as enhancing maximum model counters with support for incremental solving. By laying this groundwork, we pave the way for integrating advanced classical solvers into the quantum computing toolkit, advancing the practical realization of quantum algorithms.

 \section{Preliminaries}
\subsection{Quantum Computing}\label{sec:pre-quan}
In this section, we give the necessary notions and notations on quantum computing.
For an entailed explanation, for example on tensor product ($\otimes$), see \cite{nielsen2000quantum}.

\textbf{Quantum states.}
Let $\mathcal{H}^{\otimes n}$ be the $2^n$-dimensional Hilbert space.
An $n$-qubit quantum state, denoted as $\ket{\phi}$ using Dirac notation, is a column vector 
$\mat{\alpha_{00\dots00},\dots,\alpha_{11\dots11}}^T$ in $\mathcal{H}^{\otimes n}$,
where $|\alpha_b|^2 \in [0,1]$ are \textit{amplitudes}, satisfying: 
$|\alpha_{00\dots00}|^2 +\dots + |\alpha_{11\dots11}|^2 = 1$.
Its complex adjoint $\bra{\phi}$ is a row vector with conjugated entries:
$\bra{\varphi} = \ket{\varphi}^\dagger = $\mat{\alpha_{00\dots00}^*,\dots,\alpha_{11\dots11}^*}, therefore $\braket{\phi|\phi} = 1$.
A quantum state vector can be decomposed in the \emph{computational basis},
written as 
$
\ket{\varphi} = \sum_{b\in\{0,1\}^n}\alpha_b \ket{b}
$,
where $\ket{b}$ is a \textit{computational basis state} defined as a vector with all entries setting to~0 except at index $b$ setting to 1.

Another way to represent a state $\ket\phi$ is as a \emph{density matrix} $\rho =\ketbra{\phi}$.
The \emph{trace} of a density matrix is $1$, denoted as $\tr(\rho) = 1$,
where trace is defined as the sum of the diagonal elements of a matrix.
A density matrix can be decomposed in the \emph{Pauli basis} as in \autoref{eq:pauli}.
To understand Pauli basis decomposition,
we introduce \emph{Pauli matrices} and \emph{Pauli strings}.
The Pauli matrices are $I = \smat{1&0\\0&1}, X = \smat{0&1\\1&0}, Y= \smat{0&-i\\i&0}, Z=\smat{1&0\\0&-1}$. 
An $n$-qubit \concept{Pauli string} $\mathcal{P}$ is a parallel composition of $n$
Pauli gates, such that $\mathcal{P} \in \set{I, X, Y, Z}^{\otimes n}$.
For instance, $X\otimes Z\otimes I$ is a three-qubit Pauli string.
It is worth noting that while a density matrix $\rho$ may contain complex numbers,
the basis \textit{coefficients} $\beta_i$ in \autoref{eq:pauli} are all real numbers~\cite{mei2024eq}.
Throughout this paper, we denote $[n] = \{0,\dots, n-1\}$.
\begin{equation}\label{eq:pauli}
\rho = \sum_{j\in [4^n]} \beta_j \cdot \mathcal{P}_j
\text{~~~ for the Pauli strings ~~}
\mathcal{P}_j \in \set{I, X, Y, Z}^{\otimes n}
\end{equation}
\textbf{Quantum gates.}
An $n$-qubit quantum gate $G$ can be expressed by a $2^n\times 2^n$ unitary matrix $U$, i.e., $U^\dagger \cdot U = U \cdot U^\dagger = I^{\otimes n}$. 
A single qubit quantum gate $U$ operating on qubit $j\in[n]$ can be represented as $U_j = I^{\otimes j} \otimes U \otimes I^{\otimes n - j -1}$.
Updating a quantum state in vector form $\ket{\varphi}$ is done by matrix-vector multiplication, i.e. $\ket{\psi}=U\ket{\varphi}$.
Applying a unitary $U$ to a density matrix $\rho =\ketbra{\phi}$ should be done through \concept{conjugation}, i.e.: $U\rho U^\dagger = 
U\ketbra{\phi} U^\dagger = \ketbra{\psi}$ for $\ket\psi = U \ket\phi$.
We consider the well-known universal gate set Clifford \set{S,H,\textit{CX}} + $T$ and the gates \toffoli and $CH$,
which are defined as follows:
\[
H = \nicefrac{1}{\sqrt{2}} \begin{smat} {1&1\\1&-1}\end{smat}, \
S = \begin{smat}{1& 0\\0&i}\end{smat}, \
CX = \begin{smat}{I& \mathbf{0}\\ \mathbf{0}&X}\end{smat}, \
T = \begin{smat}{1& 0\\0&i}\end{smat},
\text{$\toffoli$} = \begin{smat}{I_6 & \mathbf{0}\\ \mathbf{0}&X}\end{smat}, \
CH = \begin{smat}{I& \mathbf{0}\\ \mathbf{0}&H}\end{smat}
\]
where $I_k$ is a $k$-dimensional identity matrix and $\mathbf{0}$ is the all zero matrix. We ignore the index when the dimension is 2, i.e. $I = I_2= \begin{smat}{1& 0\\0&1}\end{smat}$.
We denote a gate set as $\gateset$.
For an $n$-qubit circuit,
let $\gateset(k)\subseteq \gateset$ be a subset of gates applied to $k\in[n]$ qubits,
such that $\gateset(1)$ is the set of single-qubit gates, $\gateset(2)$ is the set of two-qubit gates and so on.
For example, if $\gateset = \{H,T,CX, \toffoli\}$,
$\gateset(1) = \{H,T\}$, $\gateset(2) = \{CX\}$ and $\gateset(3) = \{\toffoli\}$. We always assume $I\in\gateset$, also if not explicitly stated. 

\textbf{Quantum circuits.} The evolution of a quantum system is modeled by a \emph{quantum circuit}, a sequence of \textit{quantum gate layers}, or \emph{layers} in short, applied to all qubits at each time step. A \textit{layer} $D$ is a set of gates such that each qubit has at most one gate applied to it.
Thus, it contains only mutually parallel gates. 
For example, a \emph{layer} $D = \{G_i \mid i\in[n]\}$ applies a single-qubit local gate $G$ on each qubit $i$.
For any layer $D$, we will denote its unitary matrix with $U_{{D}}$.
Let $\mathcal D_n$ be the set of all possible layers for $n$ qubits.
For a \emph{circuit} $\mathcal{C} = (D^0, \dots, D^{d-1})$, with $D^i \in \mathcal D_n$, we thus have its unitary $U_{\mathcal{C}} = U_{D^{d-1}}\cdots U_{D^0}$.
We define a circuit's \emph{depth} as the minimal number of layers (with parallel gates) that it contains.

\paragraph{\jfid{}} 
The \emph{fidelity} between two quantum states $\ket{\psi}$ and $\ket{\varphi}$ is defined as
$\Fid(\ket{\varphi},\ket{\psi})\defn
\tr(\ketbra{\psi}\cdot\ketbra{\phi})=|\braket{\psi|\varphi}|^2$.
The fidelity between states can be extended to measure the distance between unitary operators with the help of \emph{Jamio\l{}kowski isomorphism} that maps unitary $U$ on $\hh^{\otimes n}$ to a state $\ket{\varphi_U} = (U\otimes I)\ket{\Psi_n}$ on $\hh^{\otimes{2n}}$,
where $\ket{\Psi_n} = \frac{1}{\sqrt{2^n}}\sum_{i\in\{0,1\}^n}\ket{ii}$ and $\ket{ii}$ is short for $\ket{i}\otimes\ket{i}$.
Thus, \emph{\jfid{}}~\cite{raginsky2001fidelity} between unitary operators can be formally defined as:
\begin{equation}\label{Jfid}
    \Fid_J(U,V) = \Fid(\ket{\varphi_U}, \ket{\varphi_V})=|\braket{\varphi_U|\varphi_V}|^2.
\end{equation}
In the particular case where $\Fid_J(U,V) = |\braket{\varphi_U|\varphi_V}|^2 = 1$, it follows that $U = \lambda V$ with $|\lambda|^2 = 1$. This result arises because $|\braket{\varphi_U|\varphi_V}|^2 = 1$ implies $\ket{\varphi_U} = \lambda \ket{\varphi_V}$, and from the definitions of $\ket{\varphi_U}$ and $\ket{\varphi_V}$, we conclude $U = \lambda V$.

\subsection{Maximum Weighted Model Counting}

Let $A$ be a set of Boolean variables $\{a_1, \dots, a_m\}$.
A \emph{literal} $\ell$ is a variable or its negation, e.g., $a$ or $\neg a$ (written as $\overline a$).
A \emph{clause} is a disjunction of literals $\ell_1 \lor \dots \lor \ell_m$.
A propositional formula $F$ in \emph{Conjunctive Normal Form} (CNF) is a conjunction of clauses.
We write $F(A)$ to indicate that $F$ is defined on the set of variables $A$.
Let $\mathbb{B}$ be $\{0,1\}$ and $\mathbb{R}$ be the set of real numbers.
An \emph{assignment} $\tau$ maps variables $A$ to $\mathbb{B}$, where $0,1$ represents $\mathit{False}$ and $\mathit{True}$ respectively.
The satisfiability problem is determining whether an assignment $\tau$ to $A$ exists, for which a propositional formula $F(A)$ is true.
We denote $\sat(F(A)) \defn \{ \tau \mid \tau \text{ satisfies }F(A) \}$ as the
set of all satisfying assignments for $F$. The model counting problem is to compute the number of satisfying assignments,
denoted as $\#SAT(F) \defn |SAT(F)|$.

A weight function $W: A\times \mathbb{B} \rightarrow \mathbb{R}$ maps a variable with its true or false assignment, or, viewed alternatively, a literal, to a weight.
Given an assignment $\tau$, the weight of this assignment,
written $W(\tau)$, is the product of the weight of each variable with its
assignment $W(\tau) = \prod_{a\in A} W(a,\tau(a))$.
For notational convenience, we write the weights of literals, such that for a variable $a\in A$, we denote $W(a) = W(a, 1)$ and $W(\bar{a}) = W(a, 0)$.
If unspecified, the weight of a variable $a\in A$ is assumed to be $W(a) = W(\bar{a}) = 1$. We call such variables \concept{unbiased}.
The weighted model counting (WMC) problem asks for the sum of the weights of the satisfying assignments,
i.e., $\#SAT_W(F) = \sum_{\tau\in {SAT}(F(A))} W(\tau)$.

The maximum weighted model counting problem (MWMC) extends WMC by finding an assignment to a subset of variables that maximizes the weight of the WMC problem (\autoref{def:mwmc}).
In \autoref{sec:exactsyn}, we show that quantum circuit synthesis can be reduced to it. 

\begin{definition}[MWMC \cite{audemard2022maxsat}]\label{def:mwmc}
Given a propositional formula $F(A, B)$
over disjoint sets of variables $A$ and $B$, 
and a weight function W over ${(A \cup B)}\times \mathbb{B}$,
the MWMC problem is to determine an assignment $\tau$ to $A$ that maximizes $\wmc(F(A,B))$.
\end{definition}
For notation, we define an oracle function \maxmc,
which takes the quantified Boolean formula $F(A, B)$ with its weight function $W$ and returns an assignment $\tau$ to $A$ that maximizes the weighted model count of the formula $F$ and its maximal weight $w_{\textit{max}}$:
\[\maxwmc(F(A, B)) = (\tau(A), w_{max}).\]

\subsection{Reducing Quantum Computing to Model Counting}\label{sec:pre-encode}

We present two encodings: a computational basis encoding, referred to as \combasis, where the computational basis decomposition of the state vector is encoded directly, and a Pauli basis encoding, referred to as \paulibasis, where the basis states are the Pauli strings and we encode the density matrix decomposition (see \autoref{eq:pauli}).
A quantum state is encoded as a set of satisfying assignments where each satisfying assignment with its weight represents a \emph{basis state} with its \emph{amplitude} (in \combasis) or \emph{coefficient} (in \paulibasis).
Here, we briefly introduce both.
For details, we refer to~\cite{mei2024simulating,mei2024eq,mei2024gapp}.

For an $n$-qubit quantum state $\ket{\phi}$, we denote its WMC encoding with $F_{\ket{\phi}}$.
Note that in the \texttt{\paulibasis} basis, quantum states are represented using density matrices, 
so the corresponding encoding is technically \( F_{\ketbra{\phi}} \). 
However, we abuse the notation and write \( F_{\ket{\phi}} \) in both bases for simplicity.
We reserve propositional variables
$\q = \sequence{q_0, \dots, q_{n-1}}$ in the \combasis,
and $\q = \sequence{\q_0, \dots, \q_{n-1}} = \sequence{x_0, z_0, \dots, x_{n-1}, z_{n-1}}$ in the \paulibasis.
(Here, we let $\overline{x z} \equiv I$, $x\overline{z} \equiv X$, ${x z} \equiv Y$ and $\overline{x} z \equiv Z$, and $\overline{x_k} z_k\land \bigwedge_{i\in [n]\setminus \set{k}} \overline{x_i z_i}  \equiv Z_k$, etc, as in~\cite{aaronson2008improved}.)
The variables in $\vec q$ remain unbiased.
In addition, when needed, we introduce auxiliary variables to represent weights.
Since the assignment to these auxiliary variables is always determined by the assignment to \q, we often omit these variables, writing $F_{\ket{\phi}}(\q)$ instead of $F_{\ket{\phi}}(\q, \aux)$.
\autoref{tab:state-encoding} gives encoding examples.

\setlength{\tabcolsep}{3pt}
\renewcommand{\arraystretch}{1.4}
\begin{table}[b!]
    \caption{State encoding in both bases: The auxiliary variables, marked in gray, depend fully on the unbiased variables $\vec q$ (in \combasis) and $\vec x, \vec z$ (in \paulibasis) representing basis states.}
    \label{tab:state-encoding}\vspace{-1em}
\footnotesize
    \centering
    \begin{tabular}{c|cc|c}
    \toprule
        & \bf Comp. Basis (\combasis) & \bf Pauli Basis (PB) & \bf Auxiliary weights \\
    \midrule
        Variables & $\q = \sequence{q_0, \dots, q_{n-1}}$ &  $\q = \sequence{x_0, z_0, \dots, x_{n-1}, z_{n-1}}$  &  $h, r, g, w$  \\
\midrule
        $\ket{0}  \equiv$
            & $F_{\ket{0}}(q) = \no{q} $
            & $F_{\ket{0}}(x, z\textcolor{gray}{,g}) = F_{\frac{I+Z}{2}} =  \no{x}g$   & $W(g) =(\nicefrac1{ 2})$ \\
        $\ket{-}  \equiv$
            & $F_{\ket{-}}(q \textcolor{gray}{, h, r}) = h  {(r\Leftrightarrow q)} $ 
            & $F_{\ket{-}}(x, z\textcolor{gray}{,g}) = F_{\frac{I-X}{2}} = \no{z}g$ 
            & $W(h) =(\nicefrac1{\sqrt 2})$, $W(r) = (-1)$ \\
        \cite{huang2019approximate}: $\ket{\A}  \equiv$
            &  $F_{\ket{A}}(q \textcolor{gray}{, h, w}) = h (w \Leftrightarrow q)$
            & $ F_{\ket{A}}(x, z\textcolor{gray}{,g, h}) = 
            F_{\frac I2 + \frac{X+Y}{\sqrt 2}}  $
&    $W(w)= \omega$ where $\omega = \sqrt i = \frac{i+1}{\sqrt 2} $\\
        $\ket{00}  \equiv$ & $F_{\ket{00}}(q_0, q_1) = \no{q_0} \no{q_1} $  & $F_{\ket{00}}(x_0, z_0,x_1,z_1) = \no{x_0} \no{x_1}$  & none \\
        $\ket{\text +\text +} \equiv$ & $F_{\ket{\text +\text +}}(q_0, q_1\textcolor{gray}{,g}) = g $  & $F_{\ket{\text +\text +}}(x_0, z_0,x_1,z_1) = \no{z_0} \no{z_1}$ & $W(g) =(\nicefrac1{ 2})$ \\
    \bottomrule
    \end{tabular}
\end{table}

 \setlength{\tabcolsep}{4pt}
\renewcommand{\arraystretch}{1.4}
\begin{table}[b!]
    \caption{Gate encoding in both bases using the same weights as in \autoref{tab:state-encoding}.}
    \label{tab:gate}
    \vspace{-1em}
     \label{tab:2}
\footnotesize
    \centering
\scalebox{0.95}{    \begin{tabular}{c|ccc}
    \toprule
        & \bf Comp. Basis (\combasis) & \bf Pauli Basis (\paulibasis) \\
    \midrule
    \makecell{1-qubit unitary ($U1$)} & $F_{U1}(\vec q, \vec q\,')$ where $\vec q = (q_0)$ & $F_{U1}(\vec q,\vec q\,')$ where $\vec q = (x,z)$  \\
    \makecell{$n$-qubit unitary ($Un$)} & $F_{\mathit{Un}}(\vec q, \vec q\,')$ where $\vec q = (q_0,\dots, q_n)$ & $F_{\mathit{Un}}(\vec q, \vec q\,')$ where $\vec q = (x_0,z_0,\dots, x_{n-1},z_{n-1})$  \\
    \midrule
        $H= \nicefrac1{\sqrt2} \smat{1&1\\1&-1}$ & $  F_H(q,q'\textcolor{gray}{,r,h} ) =  h \land (r \Leftrightarrow qq') $ 
            & $F_H(\vec q,\vec q\,'\textcolor{gray}{,r} ) =  (r \Leftrightarrow xz)
	\land  (z' \Leftrightarrow  x)
    \land  (x' \Leftrightarrow  z)$  \\
    $T= \smat{1&0\\0&\omega}$ & 
    $  F_T(q,q'\textcolor{gray}{,w} ) =  (q\Leftrightarrow q')\wedge(w \Leftrightarrow q)$
    & \textit{given in} \cite{mei2024simulating} \\
    $CX= \smat{1&0&0&0\\0&1&0&0\\0&0&0&1\\0&0&1&0}$ 
        &$(q_0'\Leftrightarrow q_0) \land (q_1'\Leftrightarrow (q_1 \oplus q_0))$ 
        & \textit{given in} \cite{mei2024simulating} \\
    \bottomrule
    \end{tabular}}
\end{table}

Applying a quantum gate maps a quantum state to another.
The encoding of a quantum gate $G$ is given by a Boolean function written as $F_G(\vec q, \vec q\,')$, 
where $\vec q$ is the input state and $\vec q'$ is output state after applying $G$, such that $F_{\ket{\psi}}(\vec q\, ') = F_{\ket{\phi}}(\vec q) \wedge F_G(\vec q, \vec q\,')$ for $\ket{\psi} = G\ket{\phi}$. As with state encodings, we introduce auxiliary variables when needed to represent weights introduced by the gates, and often omit these variables from the function signature.
We give an example of how to encode the gates $H$, $T$, and $CX$ in \autoref{tab:gate}.
A layer $D$ is encoded by conjoining the encodings of local gates, each applied to the variables of the relevant qubits.
For example, a two-qubit layer $D=\{H_0, T_1\}$ is encoded by $F_{D}(\q, \q\,') = F_H(\q_{0},\q\,'_{0})\wedge F_T(\q_{1},\q\,'_{1})$.

A quantum circuit $\mathcal C = (D^0,\dots, D^{d-1})$ is encoded by reserving variables $\vec q\,^0, \dots, \vec q\,^d$ for representing the initial, intermediate, and final states, and conjoining all the encoding of layers over these variables,
i.e., $F_{\mathcal C}(\vec q\,^0,\dots, \vec q\,^d) = \bigwedge_{j\in [d]}F_{D^j}(\vec q\,^j, \vec q\,^{j+1})$. When we don't need to name the intermediate states $\vec q\,^1,\dots, \vec q\,^{d-1}$, we will omit them, writing $F_{\mathcal C}(\vec q, \vec q\,') = F_{\mathcal C}(\vec q\,^0 = \vec q,\dots, \vec q\,^d = \vec q\,')$.

\autoref{lem:coefficient} shows that both WMC encodings allow for the strong simulation of any quantum circuit according to the usual definition of computing output amplitudes or coefficients~\cite{kissinger2022simulating}.
\begin{lemma}[\cite{mei2024gapp}]\label{lem:coefficient}
    Given an input state $\ket{\varphi} = \sum_{b\in{\{0,1\}}^n}\alpha_b\ket{b}$, such that $\ketbra{\varphi}{\varphi} = \sum_{j\in [4^n]}\beta_j\mathcal{P}_j$, an $n$-qubit circuit $\mathcal{C}$,
    a computational basis state $\ket{b}$ ($b\in\{0,1\}^n$) and a Pauli string $\mathcal P_j\in\{I,X,Y,Z\}^{\otimes n}$, the following equations hold:
    \[
    \begin{aligned}
        & \#SAT_W(F_{\ket{\varphi}}(\vec q)\wedge F_\mathcal{C}(\vec q, \vec q') \wedge F_{\ket{b}}(\vec q')) = \alpha_b & \text{in \ \combasis}, \\
        & \#SAT_W(F_{\ket{\varphi}}(\vec q)\wedge F_\mathcal{C}(\vec q, \vec q') \wedge F_{\mathcal P_j}(\vec q')) = \beta_j & \text{in \ \paulibasis}.
    \end{aligned}
    \]
\end{lemma}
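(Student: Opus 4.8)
The plan is to prove both identities by induction on the circuit depth $d$, reducing the claim to two facts: (1) the base case, where the circuit is empty and we must show that the encodings $F_{\ket{\varphi}}$ correctly represent amplitudes (in \combasis) and coefficients (in \paulibasis) when intersected with a basis-state encoding; and (2) the inductive step, where a single layer $D^j$ is peeled off and we invoke the property that $F_{\ket{\psi}}(\vec q\,') = F_{\ket{\phi}}(\vec q) \wedge F_{D^j}(\vec q, \vec q\,')$ faithfully encodes the action $\ket{\psi} = U_{D^j}\ket{\phi}$, i.e.\ that summing out the intermediate variables implements matrix-vector multiplication (resp.\ conjugation in the Pauli picture).

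First I would set up the \combasis case. Fix a computational basis state $\ket{b}$. By \autoref{tab:state-encoding}, $F_{\ket{b}}(\vec q\,')$ is the conjunction $\bigwedge_i \ell_i$ that forces $\vec q\,' = b$, so conjoining it and summing over all variables collapses the weighted count of $F_{\ket{\varphi}}(\vec q)\wedge F_\mathcal{C}(\vec q,\vec q\,')\wedge F_{\ket{b}}(\vec q\,')$ to the weighted count of $F_{\ket{\varphi}}(\vec q)\wedge F_\mathcal{C}(\vec q,\vec q\,'=b)$. The induction then proceeds over the layers of $\mathcal{C}$: I claim $\#SAT_W\!\big(F_{\ket{\varphi}}(\vec q\,^0)\wedge \bigwedge_{j<d}F_{D^j}(\vec q\,^j,\vec q\,^{j+1})\wedge F_{\ket{b}}(\vec q\,^d)\big)$ equals the $b$-th amplitude of $U_\mathcal{C}\ket{\varphi}$. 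For the step, one isolates the variables $\vec q\,^{d-1}$: summing over all their assignments, the weight factorizes as (partial count of the first $d-1$ layers ending in state $\vec q\,^{d-1}=c$) times (the matrix entry $\langle b| U_{D^{d-1}} |c\rangle$ captured by $F_{D^{d-1}}$ plus its auxiliary weights). This is precisely one step of matrix-vector multiplication, so the result follows by the induction hypothesis applied to the circuit $(D^0,\dots,D^{d-2})$ with final basis state $\ket{c}$, summed against $U_{D^{d-1}}$. The base case $d=0$ asserts $\#SAT_W(F_{\ket{\varphi}}(\vec q)\wedge F_{\ket{b}}(\vec q)) = \alpha_b$, which one checks directly against the state-encoding table (each literal of $F_{\ket{b}}$ fixes one $q_i$, and the surviving assignments of the auxiliary variables of $F_{\ket{\varphi}}$ sum, by construction of the weights, to $\alpha_b$).

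The \paulibasis case is analogous but with conjugation replacing left-multiplication: here $F_{\ket{\varphi}}$ encodes $\rho = \ketbra{\varphi}=\sum_j \beta_j \mathcal{P}_j$, the variables range over the $4^n$ Pauli strings via the $x_i,z_i$ pairs, $F_{\mathcal P_j}(\vec q\,')$ pins $\vec q\,'$ to the code of $\mathcal{P}_j$, and each gate encoding $F_G$ implements the Heisenberg-picture update $\mathcal{P}\mapsto U_G^\dagger \mathcal P\, U_G$ on the Pauli coefficients (which for the Clifford gates is a permutation-with-sign on Pauli strings, and for $T$ the documented branching rule). The induction structure is identical. Both directions ultimately rest on the single-gate correctness fact $F_{\ket{\psi}}(\vec q\,') = F_{\ket{\phi}}(\vec q)\wedge F_G(\vec q,\vec q\,')$ that is already part of the encoding's specification in \autoref{sec:pre-encode}, so the argument is really just lifting that fact through composition.

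I expect the main obstacle to be bookkeeping the auxiliary weight variables cleanly across the inductive step: because each layer introduces its own fresh auxiliary variables (the $h,r,w,g$ copies) whose assignments are functionally determined by $\vec q\,^j,\vec q\,^{j+1}$, one must argue that the weighted model count factorizes correctly — that the auxiliaries of layer $j$ contribute exactly the scalar attached to the transition $\vec q\,^j \to \vec q\,^{j+1}$ and nothing more, and that distinct layers' auxiliaries do not interfere. Making this rigorous amounts to observing that $\#SAT_W$ of a conjunction over disjoint auxiliary-variable blocks distributes as a product over the shared "interface" variables $\vec q\,^j$, i.e.\ a tensor/matrix-product decomposition of the encoding. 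Since the heavy lifting is cited from \cite{mei2024simulating,mei2024eq,mei2024gapp} (the encodings are defined there and their per-gate correctness is established), the proof can be kept short by invoking those per-gate and per-state facts and only carrying out the composition induction explicitly.
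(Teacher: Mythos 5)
The paper does not prove this lemma at all --- it is imported by citation from \cite{mei2024gapp} (with the per-gate and per-state correctness facts coming from \cite{mei2024simulating,mei2024eq}), so there is no in-paper proof to compare against; your induction on layers, with the weighted count factorizing over the interface variables $\vec q\,^j$ so that each layer's auxiliaries contribute exactly the transition amplitude $\bra{b}U_{D^{j}}\ket{c}$ (resp.\ the Pauli-coefficient update), is precisely the standard argument by which those works establish the result, and it is sound. The only slip is cosmetic: in \paulibasis the forward update on Pauli strings induced by $F_{\ketbra{\psi}}=F_{\ketbra{\phi}}\wedge F_G$ with $\ket{\psi}=G\ket{\phi}$ is $\mathcal P\mapsto G\,\mathcal P\,G^{\dagger}$ (Schr\"odinger-picture conjugation of the density matrix), not $G^{\dagger}\mathcal P\,G$ as you wrote.
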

We emphasize that strong simulation is canonical for quantum complexity classes~\cite{Thanos2024automated}, and therefore generalizes naturally to computing any measurement outcome probability~\cite{mei2024simulating} and determining circuit equivalence~\cite{mei2024eq}.
 \section{Problem Statement}
\label{sec:problem}

The quantum circuit synthesis problem seeks to construct a circuit that implements a given specification, which is provided as either a circuit or a unitary operator.
A key component is to determine if a guessed candidate circuit is \emph{equivalent} to the desired specification by checking unitary equivalence, as formalized in \autoref{def:approx_eq}.

\begin{definition}[Unitary equivalence]\label{def:approx_eq}
    Let $U$, $V$ be two $n$-qubit unitaries,
    Then $U$ and $V$ are $\epsilon$-equivalent,
    written as $U\simeq_{\epsilon} V$,
    if and only if the \jfid{} between $U$ and $V$ is not smaller than $1-\epsilon$, i.e. $\Fid_J(U,V) \geq 1-\epsilon$, where $\epsilon\in[0,1]$.
\end{definition}
In the above definition, 
we use the Jamio\l{}kowski fidelity defined in \autoref{Jfid} to measure the distance between two unitaries.
In particular, if and only if $\Fid_J(U,V) = 1$ ($\epsilon = 0$),
we have \emph{exact equivalence},
denoted as $U\equiv V$. In this case, $U$ and $V$ are equivalent up to a global phase $\lambda$ satisfying $|\lambda|^2 = 1$, i.e. $U = \lambda V$.
Building on equivalence checking,
\autoref{prob:c2c} gives the formal definition of (exact and approximate) quantum circuit synthesis.

\begin{problem}[Quantum circuit synthesis]\label{prob:c2c}
Given a specification represented by a circuit $\circuit_{1}$ in a gate set $\gateset_{1}$ or unitary $U_{\circuit_{1}}$ and an accuracy parameter $\epsilon \in (0,1]$,
the approximate synthesis problem asks for a depth-optimal quantum circuit $\circuit_2$ in a target gate set $\gateset_{2}$,
such that $U_{\circuit_1}\simeq_\epsilon  U_{\circuit_2}$.
For $\epsilon=0$, this is the exact synthesis problem.
\end{problem}

We consider Clifford+T as the elementary target gate set for synthesis because of its universality and importance in error-corrected quantum computing~\cite{fowler2012surface,aliferis2006quantum}.
While unitaries can be synthesized in Clifford+T (with additional ancillas) exactly if and only if their matrix entries belong to $\mathbb{Z}[\frac{1}{\sqrt{2}}, i]$~\cite{giles2013exact}, all unitaries can be synthesized approximately up to an arbitrarily small $\epsilon$~\cite{dawson2005solovaykitaev}.

In this work, we consider both exact and approximate synthesis and, for the first time, show that both problems can be reduced to MWMC. \section{Exact Quantum Circuit Synthesis}\label{sec:exactsyn}

This section presents our reduction from the exact quantum circuit synthesis problem to the MWMC problem.
We first provide an overview of the chosen approach.
Recalling \autoref{prob:c2c}, the exact synthesis problem has the following components:
\begin{itemize}
    \item \textbf{Input:} A quantum circuit $\circuit_1$ in gate set $\gateset_1$, or a unitary $U = U_{\circuit_1}$, and a finite target gate set $\gateset_2$. 
\item \textbf{Output:} A depth-optimal quantum circuit $\circuit_2$ in gate set $\gateset_2$, such that $U_{\circuit_2}\equiv U_{\circuit_1}$, if possible.
\end{itemize}
The problem can be expressed as exhaustively searching over all possible layers $\mathcal D_n$ for each layer in $\circuit_2$.
To achieve depth-optimality, we increment the depth $d$ until the following holds:
\begin{equation}\label{eq:exact_syn}
\exists D^0, \dots, D^{d-1}  \in \mathcal{D}_n  ~\colon~  U_{\circuit_1} \equiv U_{\circuit_2} \text{ where } \circuit_2 = (D^0, \dots, D^{d-1}). 
\end{equation}
In the remainder of this section, we first give the encoding of the input specification.
Then we present different encodings for checking the exact equivalence between the input and output.
Next, we encode a generic gate layer $\mathcal{D}_n$ and extend the encoding progressively to construct candidate circuits $\circuit_2$ of increasing depth, as specified in \autoref{eq:exact_syn}.
We conclude by showing how MWMC can find a depth-optimal output circuit~$\circuit_2$ using the encoding for generic layers and equivalence checking.

\textbf{Encoding input specifications.}
Due to the reversibility of quantum circuits, verifying the equivalence of circuits $\circuit_1$ and $\circuit_2$ is reducible to checking if the circuit $\circuit_2 \cdot  \circuit_1^\dagger$ is equivalent to the identity.
Therefore, we encode the input circuit $\circuit_1$ as its inverse $\circuit_1^\dagger$, denoted as $F_{\circuit_1^\dagger}(\q, \q\,')$, as explained in \autoref{sec:pre-quan}.
If the input is an $n$-qubit unitary operator $U^\dagger$, we can encode directly it as $F_{U^\dagger}(\vec q, \vec q\,')$ using weighted auxiliary variables to represent the unique components of the unitary, in the same way we encode individual gates~\cite{mei2024gapp}.

\textbf{Verifying exact equivalence.}
The exact equivalence checking problem is efficiently tackled with WMC in \cite{mei2024eq},
using a so-called \emph{linear encoding},
which requires $2n$ separate WMC calls.
In \autoref{prop:exact-eqcheck}, we extend this encoding to \emph{cyclic encoding} and \emph{linear-cyclic encoding}. These two new encodings solve the problem with a single call to the weighted model counter,
\emph{as we require in the proposed synthesis approach}.

\begin{restatable}{theorem}{exacteqcheck}
\label{prop:exact-eqcheck}
Let $\mathcal C$ be an $n$-qubit circuit, which is encoded by $F_{\mathcal C}$ with the corresponding weight function $W$.
Then, the following four statements are equivalent to each other:
\begin{itemize}
    \item $\circuit \equiv I^{\otimes n}$ ($\circuit$ is equivalent to the identity circuit $I^{\otimes n} = I_{2^n}$).
    \item Encoding the circuit $\circuit$ in \paulibasis, the  \textbf{linear encoding} has weighted model count~\cite{mei2024eq}:
    \begin{equation}\label{eq:linear-encoding}
    \wmc(F_{\mathcal{P}}(\vec{q}) \wedge F_\mathcal C\left(\vec{q}, \vec{q}\,' \right) \wedge F_{\mathcal{P}}(\vec{q}\,')) = 1 \text{ for all } \mathcal{P} \in \set{X_j, Z_j \mid j\in [n]}.
    \end{equation}
    \item Encoding the circuit in either \combasis{} or \paulibasis, the \textbf{cyclic encoding} has weighted model count \\
    (this approach can be viewed as checking ``overlap'' with the identity $I^{\otimes n}$):
    \begin{equation}\label{eq:cyclic-encoding}
    \wmc(F_\mathcal C\left(\vec{q}, \q\,' \right) \wedge F_{I^{\otimes n}}(\vec{q}, \vec{q}\,')) = c
      \text{ with } |c|=2^n \text{ for \combasis{} and }
    c = 4^n \text{ for \paulibasis}.
    \end{equation}
    \item Encoding the circuit $\circuit$ in \paulibasis, the \textbf{linear-cyclic encoding} has weighted model count:
    \begin{equation}\label{eq:linear-cyclic-encoding}
    \wmc(\bigvee_{\mathcal{P} \in \set{X_j, Z_j \mid j\in [n]}}F_{\mathcal{P}}(\vec{q}) \wedge F_\mathcal C\left(\vec{q}, \vec{q}\,' \right) \wedge F_{I^{\otimes n}}(\vec{q}, \vec{q}\,')) = 2n.
    \end{equation}
\end{itemize}
 where $\vec{q}, \vec{q}\,'$ are Boolean variables encoding the initial and final quantum state, respectively.
 Note that $ F_{I^{\otimes n}}(\vec q, \vec q\, ') = \bigwedge_{i\in [n]} ({q}_i\Leftrightarrow {q}_i')$, where ${q}_i\Leftrightarrow {q}_i'$ is shorthand for $(x_i\Leftrightarrow x_i') \land (z_i\Leftrightarrow z_i')$ in the Pauli basis.
\end{restatable}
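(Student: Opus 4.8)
The plan is to translate each of the four statements into a linear-algebraic fact about the unitary $U = U_\circuit$ and its \emph{Pauli transfer matrix} $R$, defined by $U\,\mathcal{P}_i\,U^\dagger = \sum_{j\in[4^n]}R_{ji}\,\mathcal{P}_j$, where the $4^n$ Pauli strings are ordered so that $\mathcal{P}_0 = I^{\otimes n}$. Because conjugation by a unitary preserves Hermiticity and the Hilbert--Schmidt inner product $\langle A,B\rangle = 2^{-n}\tr(A^\dagger B)$, under which the $\mathcal{P}_j$ are orthonormal, $R$ is a real orthogonal $4^n\times 4^n$ matrix. Moreover $\circuit\equiv I^{\otimes n}$, i.e.\ $U = \lambda I$ with $|\lambda|=1$, holds if and only if $R = I_{4^n}$: one implication is a one-line computation, and for the converse $R = I_{4^n}$ forces $U X_j U^\dagger = X_j$ and $U Z_j U^\dagger = Z_j$ for all $j\in[n]$, so $U$ commutes with a linear generating set of the full $2^n\times 2^n$ matrix algebra and is hence a scalar of unit modulus --- this is exactly the equivalence of the first statement with the linear encoding proved in~\cite{mei2024eq}.

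Next I would evaluate each weighted model count. By construction of the encodings (see~\cite{mei2024simulating,mei2024eq} and \autoref{lem:coefficient}), $F_\circuit(\vec q,\vec q\,')$ realizes in \paulibasis the linear map $\rho\mapsto U\rho U^\dagger$ on Pauli-coefficient vectors and in \combasis the map $\ket\phi\mapsto U\ket\phi$ on amplitude vectors, so $\wmc(F_{\mathcal{P}_i}(\vec q)\wedge F_\circuit(\vec q,\vec q\,')\wedge F_{\mathcal{P}_j}(\vec q\,')) = R_{ji}$ and $\wmc(F_{\ket b}(\vec q)\wedge F_\circuit(\vec q,\vec q\,')\wedge F_{\ket{b'}}(\vec q\,')) = U_{b'b}$. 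The three new encodings then arise by summing such entries along ``diagonals'': $F_{I^{\otimes n}}(\vec q,\vec q\,')$ is precisely the constraint $\vec q = \vec q\,'$, and the singleton formulas $F_{\mathcal{P}_i}(\vec q)$ (resp.\ $F_{\ket b}(\vec q)$) have pairwise disjoint models exhausting all assignments of the $\vec q$-variables, so by additivity of $\wmc$ over these cases the cyclic formula $\wmc(F_\circuit(\vec q,\vec q\,')\wedge F_{I^{\otimes n}}(\vec q,\vec q\,'))$ equals $\sum_{i\in[4^n]}R_{ii} = \tr(R)$ in \paulibasis and $\sum_b U_{bb} = \tr(U)$ in \combasis, the linear-cyclic formula equals $\sum_{\mathcal{P}\in\set{X_j,Z_j\mid j\in[n]}}R_{\mathcal{P}\mathcal{P}}$ (a sum of $2n$ diagonal entries of $R$), and the linear encoding simply reports the individual entries $R_{\mathcal{P}\mathcal{P}}$.

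It then remains to apply elementary spectral bounds. As $R$ is real orthogonal, its eigenvalues lie on the unit circle and it is diagonalizable, so $\tr(R)\le 4^n$ with equality iff every eigenvalue is $1$ iff $R = I_{4^n}$, which settles the \paulibasis cyclic case ($c = 4^n$). As $U$ is unitary, $|\tr(U)|\le 2^n$ by the triangle inequality, with equality iff all eigenvalues of $U$ coincide iff $U$ is a unit-modulus scalar; the absolute value in $|c| = 2^n$ simply records that the global phase survives in \combasis but cancels under conjugation in \paulibasis. Finally, each column of the orthogonal matrix $R$ has unit Euclidean norm, so $R_{\mathcal{P}\mathcal{P}}\le 1$, and $R_{\mathcal{P}\mathcal{P}} = 1$ forces that column to equal $e_{\mathcal{P}}$, i.e.\ $U\mathcal{P}U^\dagger = \mathcal{P}$; hence ``$R_{\mathcal{P}\mathcal{P}} = 1$ for every generator'' (linear encoding) and ``the $2n$ entries sum to $2n$'' (linear-cyclic) each force $U\mathcal{P}U^\dagger = \mathcal{P}$ for all single-qubit generators and therefore $R = I_{4^n}$. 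Chaining these equivalences with the first paragraph shows that all four statements are equivalent to $\circuit\equiv I^{\otimes n}$, hence to one another. The step I expect to be the main obstacle is the second: rigorously showing that the ``open'' circuit formula $F_\circuit(\vec q,\vec q\,')$, which carries no input-state encoding, conjoined with $F_{I^{\otimes n}}$ genuinely computes a (partial) trace --- one must track the auxiliary weight variables of $F_\circuit$ and invoke distributivity of $\wmc$ to split the identification $\vec q = \vec q\,'$ (and the $2n$-way disjunction in the linear-cyclic case) into a sum of per-basis-state contributions. The spectral arguments afterwards are routine.
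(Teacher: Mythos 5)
Your proposal is correct, and its overall architecture matches the paper's: each count is rewritten as a (sub)sum of diagonal entries of either $U$ (in \combasis) or of the conjugation superoperator (in \paulibasis), each entry is bounded by $1$ in modulus, and the equality case is characterized. The difference is one of packaging and self-containedness. The paper outsources its two key ingredients: the per-Pauli bound $\wmc(F_{\mathcal P}(\vec q)\wedge F_\circuit(\vec q,\vec q\,')\wedge F_{\mathcal P}(\vec q\,'))\le 1$ is cited from \cite{mei2024eq}, and the fact that fixing the $2n$ generators $X_j,Z_j$ under conjugation already forces fixing all $4^n$ Pauli strings is cited from \cite{thanos2023fast}. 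You derive both from the real orthogonality of the Pauli transfer matrix $R$ (unit-norm columns give the bound and its equality characterization; the observation that $\set{X_j,Z_j}$ generate the full matrix algebra gives the sufficiency of the generators), which unifies the linear, cyclic-\paulibasis{} and linear-cyclic cases into a single statement about diagonal entries of $R$. For \combasis{} both arguments reduce to $\abs{\tr(U)}\le 2^n$; your explicit appeal to the equality condition of the triangle inequality (all eigenvalues of the normal matrix $U$ aligned, hence $U=\lambda I$) is in fact slightly more careful than the paper's chain of biconditionals, which leaves implicit the backward implication from $\sum_{b}\bra{b}U\ket{b}=\lambda\cdot 2^n$ to $\bra{b}U\ket{b}=\lambda$ for all $b$. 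The step you flag as the main obstacle---that conjoining the open circuit formula with $F_{I^{\otimes n}}$ and using additivity of $\wmc$ over the mutually exclusive basis-state formulas yields a trace---is handled in the paper in exactly the same way, by expanding the count as a sum over $b$ (respectively over Pauli strings) and invoking \autoref{lem:coefficient}, so no new idea is required there.
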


We give a detailed proof in \aref{app:prof_exact}, and illustrate the theorem in \autoref{ex:equiv}.
\begin{example}
\label{ex:equiv}
Consider two circuits $\circuit_1= (S)$ and $\circuit_2= (T,T)$.
To check their equivalence, we first encode the circuits $\circuit = \circuit_2 \cdot \circuit_1^\dagger$ and $I$:
\[
\begin{aligned}
F_\circuit(q^0,  q^3) := F_{S^\dagger}( q^0, {q}^1) \wedge F_T(q^1, {q}^2) \wedge F_T(q^2, {q}^3)  \quad\text{  and  }\quad  F_I(q, q') := q \Leftrightarrow q'
\end{aligned}
\]
Then we check if $\circuit \equiv I$ as follows:\vspace{-1.5ex}
\begin{itemize}
    \item Linear encoding: $\wmc(F_\mathcal{P}(q^0) \land F_\circuit(q^0,  q^3) \land  F_\mathcal{P}(q^3)) = 1$  for $\mathcal{P} \in \set{X, Z}$ in \paulibasis.
\item Cyclic encoding: $\wmc(F_\circuit(q^0,  q^3) \wedge F_I(q^0,  q^3)) = c$ with
$|c| = 2$ (\combasis{}) or $c= 4$ (\paulibasis).
\item Linear-cyclic encoding: $\wmc((F_X \lor F_Z)\land  F_\circuit \wedge F_I) = 2$ in \paulibasis.\qed
\end{itemize}
\end{example}

\textbf{Encoding synthesis layers.}
Building up the output circuit is done by incrementally finding a sequence of gate layers implementing the input specification, as specified in \autoref{eq:exact_syn}.
To explore the space of possible gate layers, we encode the complete set $\mathcal{D}_n$ by introducing gate-selecting variables. Once these variables are fixed, they uniquely determine a specific gate layer within the set.

Given the target gate set $\mathcal{G}$,
we first consider single-qubit gates $\mathcal{G}(1)\subseteq \mathcal{G}$.
For each gate $G\in \mathcal G(1)$, we define the Boolean variable $p_{G,i}$ for each qubit $i\in [n]$,
where $p_{G,i}$ is true if and only if $G_i$ is included in the layer.
Thus, we can encode the single-qubit gates of the layer as:
\vspace{-1em}
\begin{equation}\label{cons:single}
F_{\mathcal G(1)}(\vec q, \vec q\,', \vec p(1)) = \bigwedge_{i\in[n]}\bigwedge_{G\in\mathcal{G}(1)} (p_{G,i}\Rightarrow F_{G}(q_i, q'_i)),
\end{equation}
where the variables $\vec q$ and $\vec q\,'$ encode the states before and after the layer respectively,
and $\vec p(1) = \{p_{G,i} \ | \ G\in\mathcal G(1), i\in[n]\}$ are the single-qubit gate-selecting variables.

Similarly, for a two-qubit gates $\mathcal G(2)\subseteq\mathcal G$,
we introduce quadratically many variables for all combinations of the two qubits: $\vec p(2) = \{p_{G,i,j}\ | \ G\in \mathcal G(2), i,j\in[n], i\neq j \}$. Then the encoding is:
\begin{equation}\label{cons:multi}
F_{\mathcal G(2)}(\vec q, \vec q\,', \vec p(2)) = 
\bigwedge_{i,j\in[n], j\neq i} \ \bigwedge_{G\in\mathcal{G}(2)} (p_{G,i,j}\Rightarrow F_{G_{i,j}}(\vec q, \vec q\,')),
\end{equation}

Since we consider the universal gate set Clifford+T,
which only includes single-qubit gates and two-qubit gates, the set of gate-selecting variables is given by $\vec p = \vec p(1) \cup \vec p(2)$ of the synthesis layer.
It is worth noting that the identity operator $I$ is included when considering multi-qubit circuits for completeness. 
For example when considering $\{CX,H,T\}$,
we define the set of single-qubit gates as $G(1) = \{H,T,I\}$.

A layer is valid only when applying exactly one gate to each qubit.
For example, $p_{H,1}$ and $p_{T,1}$ should not be true simultaneously.
Therefore, we define $\mathtt{EXO}(V) = \bigvee_{v\in V} v \wedge \bigwedge_{u,v\in V, u\neq v}(\overline{v} \vee \overline{u})$, a constraint ensuring that exactly one variable in the set $V$ is true.
Now, for each qubit $i \in [n]$, we apply:
\begin{equation}\label{cons:exo}
F_{\mathtt{EXO}}(\vec p) = \bigwedge_{i\in [n]}\mathtt{EXO}(\vec p_i),
\text{ where }
\vec p_i = \{p_{G,i} \ | \ G\in\mathcal G(1)\} \cup \{p_{G,i,j}, p_{G,j,i}\ | \ G\in \mathcal G(2), j\in[n] \}.
\end{equation}
Combining the three constraints in \autoref{cons:single}, \autoref{cons:multi} and \autoref{cons:exo} gives us the layer encoding:
\begin{equation}\label{eq:syn}
    F_{\mathcal{D}, \mathcal G}(\vec q, \vec q\,',\vec p) = 
F_{\mathcal G(1)}(\vec q, \vec q\,', \vec p(1)) \wedge
F_{\mathcal G(2)}(\vec q, \vec q\,', \vec p(2)) \wedge F_{\mathtt{EXO}}(\vec p)
\end{equation}

If there are gates in the target gate set $\gateset$ applied to more than two qubits, the encoding is expanded accordingly. 

\begin{example}
    The encoding of a layer in the universal gate set $\mathcal G = \{CX, H, T\}$ is as follows:
\[
\begin{aligned}
F_{\mathcal{D}, \mathcal G}(\vec q, \vec q\,', \vec p) = &
    \bigwedge_{i\in[n]}
    \bigl(p_{H,i} \Rightarrow F_H(q_i, q'_i) \wedge 
        p_{T,i} \Rightarrow F_T(q_i,q'_i) \wedge 
     p_{I,i} \Rightarrow F_I(q_i,q'_i)\bigr) \\ 
    ~~~\land &~~~ \bigwedge_{i,j\in[n], j \neq i}
    \bigl(p_{CX,i,j} \Rightarrow F_{CX_{i,j}}((q_i, q_j), (q'_i,q'_j))\bigr) ~~~\land~~~ \bigwedge_{i\in[n]} \mathtt{EXO}(\vec p_i),
   \\
\end{aligned}
\]
where $\vec p = \bigcup_{i\in[n]} \vec p_i$ and $\vec p_i = \{p_{I,i}, p_{H,i}, p_{T,i}\} \cup \bigcup_{j\in[n], j\neq i}\{p_{CX,i,j}, p_{CX,j,i}\}$.
\end{example}

To encode multiple layers,
we reserve state variables $\vec q\,^t$ and gate-selecting variables $\vec p\,^t$ for the $t$-th layer and substitute the above formula with the variables for each time step, i.e. $F_{\mathcal{D}, \mathcal G}(\vec q\,^{t}, \vec q\,^{t+1}, \vec p\,^t)$.
Encoding $d$ layers is given by a conjunction of the encodings: 
\begin{equation}\label{cons:gate-layer}
\bigwedge_{t\in[d]} F_{\mathcal{D}, \mathcal G}(\vec q\,^{t}, \vec q\,^{t+1}, \vec p\,^t).
\end{equation}

\textbf{Encoding exact synthesis.}
After giving the encoding of the input specification $U_{\circuit_1}$, the encoding of exact equivalence checking, and the encoding of gate layers, we now show the encoding of exact synthesis.
The main idea is to determine a minimal sequence of gate layers $\circuit_2 = (D^1, \dots, D^d)$ such that $U_{\circuit_2}\equiv U_{\circuit_1}$,
which can be represented with the cyclic encoding as: \vspace{-1ex}
\begin{equation}
\label{eq:syn_final_equation}
Syn_{\mathbf{C}, \mathbf{B}, \mathcal G,\mathcal C_1, d}(P,Q) ~~~=~~~
\overbrace{F_{U_{\mathcal C_1}^\dagger}\left(\vec{q}, \vec{q}\,^0 \right) 
\wedge  \bigwedge_{t\in[d]} F_{\mathcal{D}, \mathcal G}(\vec q\,^{t}, \vec q\,^{t+1}, \vec p\,^t)}^{U_{\mathcal C_2} U_{\mathcal C_1}^\dagger} 
~~\wedge ~~ \overbrace{(\vec q \Leftrightarrow \vec q\,^d)}^{F_{I^{\otimes n}}} ,
\end{equation}
where $\mathbf C$ denotes the encoding is cyclic, $\mathbf{B}$ denotes the chosen basis (which can be either \paulibasis or \combasis),
$P = \bigcup_{t\in[d]}\vec p\,^t$ is the set of gate-selecting variables, and
$Q = \vec q \cup \bigcup_{t\in[d+1]}\vec q\,^t \cup \aux\,^t$ is the set of all state variables $\vec q$ and auxiliary variables $\aux$. 
For a linear-cyclic checking (denoted by $\mathbf{LC}$)(only for \paulibasis),
the encoding is done by adding constraints to the initial state variables 
\begin{equation}
\label{eq:syn_final_equation2}
Syn_{\mathbf{LC}, \text{\paulibasis}, \mathcal G,\mathcal C_1, d}(P,Q) = \bigvee_{\pauli\in\{X_j,Z_j|j\in[n]\}}F_\pauli(\vec q)\wedge Syn_{\mathbf{C}, \text{\paulibasis}, \mathcal G,\mathcal C_1, d}(P,Q).
\end{equation}

\autoref{prop:syn-exact} shows how the exact synthesis problem is reduced to the MWMC problem.
It essentially reduces the problem of finding an assignment for the gate-selecting variables that maximizes the weighted model count of the above encodings in \autoref{eq:syn_final_equation} and \autoref{eq:syn_final_equation2}.
The correctness of \autoref{prop:syn-exact} follows directly from the cyclic and linear-cyclic encodings of \autoref{prop:exact-eqcheck},
which concludes this section.

\begin{proposition}\label{prop:syn-exact}
    Given a quantum circuit $\circuit_1$ (or its unitary $U_{\circuit_1}$) and an integer $d$,
    there exists a $d$-depth circuit $\circuit_2$ such that 
    $U_{\circuit_1}\equiv U_{\circuit_2}$ iff 
    \[
    \text{\maxwmc}\bigl(Syn_{\mathbf{E}, \mathbf{B}, \mathcal G,\mathcal C_1, d}(P,Q)\bigr) = (c, \tau(P)),
    \]
    where for $\mathbf{E}=\mathbf{LC}$ (linear-cyclic encoding) and $\mathbf{B} = \text{\paulibasis}$ we have that $c=2n$, and for $\mathbf{E} = \mathbf{C}$ (cyclic encoding) if $\mathbf{B} = \text{\paulibasis}$ we have that $c = 4^n$ and if $\mathbf{B} = \text{\combasis}$ we have that $|c|=2^n$.
    When the maximal value is achieved, the output circuit $\mathcal C_2$ can be directly determined from the satisfying assignment $\tau(P)$.
\end{proposition}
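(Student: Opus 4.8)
The plan is to derive \autoref{prop:syn-exact} as an essentially mechanical corollary of the cyclic (resp.\ linear-cyclic) equivalence-checking characterizations already established in \autoref{prop:exact-eqcheck}, by showing that, for each fixed assignment $\tau(P)$ to the gate-selecting variables, the weighted model count of $Syn_{\mathbf{E},\mathbf B,\mathcal G,\mathcal C_1,d}$ restricted to that assignment equals exactly the weighted model count of the corresponding equivalence-check encoding for the circuit $\mathcal C_2^{\tau}\cdot\mathcal C_1^\dagger$, where $\mathcal C_2^{\tau}$ is the $d$-layer circuit that $\tau$ selects. First I would record the key structural fact about the layer encoding \autoref{eq:syn}: once $\tau$ fixes $\vec p\,^t$ for every $t$, the constraint $F_{\mathtt{EXO}}(\vec p\,^t)$ is either violated (making the count $0$) or forces, on each qubit, exactly one gate $G$ with $p_{G,\cdot}^t$ true; in the latter case the surviving implications $p_{G,\cdot}^t\Rightarrow F_G(\cdots)$ collapse to precisely the layer encoding $F_{D^t}(\vec q\,^t,\vec q\,^{t+1})$ of that unique layer $D^t$, and the implications with false antecedents disappear. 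Hence $Syn_{\mathbf C,\mathbf B,\mathcal G,\mathcal C_1,d}|_{\tau(P)}$ is logically equivalent (over the remaining variables $Q$) to $F_{U_{\mathcal C_1}^\dagger}(\vec q,\vec q\,^0)\wedge\bigwedge_t F_{D^t}(\vec q\,^t,\vec q\,^{t+1})\wedge F_{I^{\otimes n}}(\vec q,\vec q\,^d)$, which is exactly $F_{\mathcal C}(\vec q,\vec q\,')\wedge F_{I^{\otimes n}}(\vec q,\vec q\,')$ for $\mathcal C=\mathcal C_2^{\tau}\cdot\mathcal C_1^\dagger$ up to renaming the intermediate state variables.

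Next I would invoke \autoref{prop:exact-eqcheck} directly: by the cyclic-encoding clause, $\wmc$ of that formula is a value $c_\tau$ with $|c_\tau|=2^n$ (in \combasis) or $c_\tau=4^n$ (in \paulibasis) precisely when $\mathcal C\equiv I^{\otimes n}$, i.e.\ when $U_{\mathcal C_2^\tau}\equiv U_{\mathcal C_1}$; and an analogous statement with target value $2n$ holds for the linear-cyclic encoding via \autoref{eq:linear-cyclic-encoding} and \autoref{eq:syn_final_equation2}. The remaining point is to argue that this per-assignment value is in fact the \emph{maximum} over all $\tau(P)$, so that $\maxwmc$ returns it exactly when some equivalent circuit exists. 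For the \paulibasis\ cases this is immediate because $4^n$ (resp.\ $2n$) is provably the maximum possible weighted model count of the equivalence-check formula over \emph{any} circuit $\mathcal C$ — this is part of what \autoref{prop:exact-eqcheck} and its proof in \aref{app:prof_exact} establish (the count is a sum over computational/Pauli basis states of overlap terms bounded in absolute value, maximized exactly at the identity). For \combasis\ one must be slightly more careful since $c$ is only determined up to sign and possibly up to absolute value for non-equivalent circuits; here I would note that $|c|\le 2^n$ always and that the maximum of the \emph{weighted model count itself} (a signed/complex quantity — and the paper works with a d4Max extended to such weights) is attained exactly at $|c|=2^n$, equivalently $c=2^n$ after a global-phase normalization, matching the $|c|=2^n$ condition in the statement.

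Then the two directions of the iff follow: if a $d$-depth $\mathcal C_2$ with $U_{\mathcal C_1}\equiv U_{\mathcal C_2}$ exists, pick the $\tau(P)$ encoding its layers (valid layers satisfy $\mathtt{EXO}$), and by the above this $\tau$ achieves the target value $c$, so $\maxwmc$ returns $(c,\tau(P))$ for some maximizing assignment; conversely, if $\maxwmc$ returns value $c$ as in the statement, the maximizing $\tau(P)$ must satisfy $F_{\mathtt{EXO}}$ (else its count is $0<c$) and therefore decodes to a concrete $d$-layer circuit $\mathcal C_2$, for which the cyclic/linear-cyclic characterization forces $U_{\mathcal C_2}\equiv U_{\mathcal C_1}$. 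Finally I would remark that reading off $\mathcal C_2$ from $\tau(P)$ is exactly the decoding map described around \autoref{eq:syn}: qubit $i$ at layer $t$ carries the unique gate $G$ with $p_{G,i}^t$ (or $p_{G,i,j}^t$) set to true.

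The main obstacle I anticipate is not the logical bookkeeping — the collapse of the layer encoding under a fixed $\tau(P)$ is routine — but making the \emph{maximality} argument airtight in the \combasis\ case, where the weighted model count is a complex number only pinned down in modulus for non-identity circuits; one has to be precise about what ``maximize'' means for complex weights (the paper sidesteps this by the $|c|=2^n$ phrasing, so I would mirror that, and lean on the bound $|\wmc|\le 2^n$ proven in \aref{app:prof_exact} together with the fact that equality characterizes equivalence to the identity). A secondary subtlety is confirming that no \emph{invalid} assignment $\tau(P)$ (one violating $\mathtt{EXO}$ at some layer/qubit) can spuriously achieve or exceed the target value — but since such assignments make $F_{\mathtt{EXO}}$ unsatisfiable and hence the weighted count literally $0$, and the target $c$ is nonzero, this is ruled out immediately.
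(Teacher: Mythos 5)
Your proposal is correct and follows essentially the same route as the paper, which simply asserts that the proposition ``follows directly from the cyclic and linear-cyclic encodings of \autoref{prop:exact-eqcheck}''; you have fleshed out exactly that derivation (collapse of the layer encoding under a fixed $\tau(P)$, invocation of the per-circuit characterization, and the maximality/invalid-assignment checks). Your explicit treatment of the \combasis{} complex-valued maximality subtlety is more careful than anything the paper itself provides, but it does not constitute a different approach.
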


\begin{example}
Let us consider the circuit $\mathcal C_1= (S)$ and the target gate set  $\mathcal G =\{CX, H, T\}$. Since it is a single-qubit circuit, the encoding of one general layer is, where for gate $G$ we denote $F_G=F_G(q_0,q'_0)$
\[
\begin{aligned}
F_{\mathcal{D}, \mathcal G}(\vec q = (q_0), \vec q\,' = (q'_0), \vec p) =
    \bigl((p_{H,0} \Rightarrow F_H) \wedge 
        (p_{T,0} \Rightarrow F_T) \wedge 
     (p_{I,0} \Rightarrow F_I&)\bigr)
    ~\land ~ \bigwedge_{i\in[n]} \texttt{EXO}( \vec p ), \end{aligned}
\]
We synthesize the circuit by calling $Max\#SAT_W$ to find an assignment for the gate-selection variables such that the function achieves maximal value:
\[
\begin{aligned}
Max\#SAT_W(Syn_{\mathbf{LC}, \text{\paulibasis}, \mathcal G,\mathcal C_1, 1}(P,Q)) &= (0.854, ~ \tau_1=\{ p_{H,0}^0 \leftarrow 0, p_{T,0}^0\leftarrow 1, p_{I,0}^0\leftarrow 0\}) \\ 
Max\#SAT_W(Syn_{\mathbf{LC}, \text{\paulibasis}, \mathcal G,\mathcal C_1, 2}(P,Q)) &= (1, ~ \tau_2=\{ p_{H,0}^0 \leftarrow 0, p_{T,0}^0 \leftarrow 1,  p_{I,0}^0 \leftarrow 0, \\ & \quad\quad\quad\quad\quad
 p_{H,0}^1 \leftarrow 0, p_{T,0}^1 \leftarrow 1,  p_{I,0}^1 \leftarrow 0
\})
\end{aligned}
\]
From the assignment $\tau_2$, we get the synthesized circuit $\mathcal C_2= (T, T)$.
\end{example}

 \section{Approximate Quantum Circuit Synthesis}
In this section,
we focus on approximate synthesis as given in \autoref{prob:c2c}.
The input and output of the problem are now as follows:
\begin{itemize}
    \item \textbf{Input:} A quantum circuit $\circuit_1$ in a gate set $\gateset_1$ or directly the unitary $U_{\circuit_1}$, a finite target gate set $\gateset_2$, and an error bound $\epsilon\in(0,1]$.
    \item \textbf{Output:} A depth-optimal quantum circuit $\circuit_2$ in gate set $\gateset_2$ such that $\circuit_2\simeq_{\epsilon}\circuit_1$.
\end{itemize}
Since synthesis relies on equivalence checking, our first aim is to lift the latter to 
approximate equivalence checking.
\autoref{prop:fidwmc} shows that the cyclic encoding in both \paulibasis{} and \combasis{} computes the \jfid{} between two circuits, thus both encodings can determine approximate equivalence checking based on \autoref{def:approx_eq}.

\begin{restatable}{theorem}{fidelity}\label{prop:fidwmc}
Given two unitary matrices $U$ and $V$ on an $n$-qubit Hilbert space $\mathcal{H}^{\otimes{n}}$,
    the \jfid{} can be computed by
    \[
    \Fid_J (U,V)
    ~~~=~~~ 
    \begin{cases}
    \frac{1}{4^n} \cdot     {\#SAT_W(F_{U^\dagger V}(\vec q, \vec q\,')\wedge (\vec q\Leftrightarrow \vec q\,'))} & \text{in } ~~ \paulibasis \\
    \frac{1}{4^n} \cdot \abs{\#SAT_W(F_{U^\dagger V}(\vec q, \vec q\,')\wedge (\vec q\Leftrightarrow \vec q\,'))}^2 & \text{in } ~~ \combasis
    \end{cases}.
\]
\end{restatable}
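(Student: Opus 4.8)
\textbf{Proof plan for \autoref{prop:fidwmc}.}
The plan is to connect the weighted model count of the cyclic encoding $F_{U^\dagger V}(\vec q, \vec q\,') \wedge (\vec q \Leftrightarrow \vec q\,')$ to the trace $\tr(U^\dagger V)$, and then to relate that trace to the Jamio\l kowski fidelity. First I would establish the algebraic identity $\Fid_J(U,V) = \frac{1}{4^n}|\tr(U^\dagger V)|^2$. This follows directly from \autoref{Jfid}: we have $\braket{\varphi_U|\varphi_V} = \bra{\Psi_n}(U^\dagger \otimes I)(V \otimes I)\ket{\Psi_n} = \bra{\Psi_n}(U^\dagger V \otimes I)\ket{\Psi_n}$, and expanding $\ket{\Psi_n} = \frac{1}{\sqrt{2^n}}\sum_i \ket{ii}$ gives $\frac{1}{2^n}\sum_{i,i'} \bra{i'}U^\dagger V\ket{i}\braket{i'|i} = \frac{1}{2^n}\tr(U^\dagger V)$; squaring the modulus yields the claim with the $\frac{1}{4^n}$ prefactor.

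Next I would show that the cyclic-encoding model count computes $\tr(U^\dagger V)$ (in \combasis) or something proportional to it (in \paulibasis). In the \combasis{} case, by \autoref{lem:coefficient} applied with input $\ket{b}$ and circuit $U^\dagger V$, the quantity $\#SAT_W(F_{\ket{b}}(\vec q) \wedge F_{U^\dagger V}(\vec q, \vec q\,') \wedge F_{\ket{b}}(\vec q\,'))$ equals the diagonal entry $\bra{b}U^\dagger V\ket{b}$. The clause $(\vec q \Leftrightarrow \vec q\,')$ forces the initial and final basis-state encodings to coincide and leaves the shared variables $\vec q$ unconstrained (and unbiased), so summing over all $b \in \{0,1\}^n$ gives exactly $\sum_b \bra{b}U^\dagger V\ket{b} = \tr(U^\dagger V)$; taking $\frac{1}{4^n}|\cdot|^2$ matches the claimed \combasis{} formula. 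For the \paulibasis{} case, the analogous argument with \autoref{lem:coefficient} shows the model count sums $\beta_j$-type coefficients over all Pauli strings tied through $(\vec q \Leftrightarrow \vec q\,')$; here one needs the fact (used already in \autoref{prop:exact-eqcheck}) that the Pauli-basis cyclic encoding of a circuit $\circuit$ yields $\sum_j \tr(\mathcal P_j^\dagger U_\circuit \mathcal P_j U_\circuit^\dagger)/(\text{normalization})$, which for $U_\circuit = U^\dagger V$ equals $|\tr(U^\dagger V)|^2$ directly (no outer squaring needed), explaining the asymmetry between the two cases. I would make this precise by expanding $U^\dagger V$ in the Pauli basis and using orthogonality of Pauli strings under the trace inner product, $\tr(\mathcal P_i^\dagger \mathcal P_j) = 2^n \delta_{ij}$.

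The main obstacle I anticipate is pinning down the exact normalization and the modulus-versus-square discrepancy between the two bases, and in particular justifying rigorously why the \paulibasis{} count is real and already equals $|\tr(U^\dagger V)|^2$ while the \combasis{} count is a (generally complex) number equal to $\tr(U^\dagger V)$ that must then be squared in modulus. This comes down to the structural difference that \paulibasis{} encodes density matrices (conjugation action $U\rho U^\dagger$, which is "quadratic" in $U$) whereas \combasis{} encodes state vectors (linear action $U\ket\phi$). I would handle it by carefully tracking, for the Pauli-basis encoding, that the cyclic clause $(\vec q \Leftrightarrow \vec q\,')$ amounts to evaluating $\sum_j \langle\!\langle \mathcal P_j | U^\dagger V \cdot (-) \cdot (U^\dagger V)^\dagger | \mathcal P_j\rangle\!\rangle$ in the Pauli-transfer-matrix picture, whose trace is $|\tr(U^\dagger V)|^2$, and then dividing by $4^n$. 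A final small step is to confirm that both expressions correctly specialize: when $U \equiv V$ up to global phase, both sides equal $1$, recovering the cyclic-encoding case of \autoref{prop:exact-eqcheck} (with $c = 4^n$ in \paulibasis{} and $|c| = 2^n$ in \combasis), which serves as a sanity check on the normalization.
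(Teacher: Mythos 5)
Your proposal is correct and follows essentially the same route as the paper: both first reduce $\Fid_J(U,V)$ to $\frac{1}{4^n}\abs{\tr(U^\dagger V)}^2$ via the maximally entangled state, then invoke \autoref{lem:coefficient} to identify the \combasis{} cyclic count with $\tr(U^\dagger V)$, and handle \paulibasis{} by summing the per-Pauli-string counts and collapsing the double sum with Pauli orthogonality (the paper expands $\ketbra{\Psi_n}$ in the Pauli basis where you phrase it as a Pauli-transfer-matrix trace, but the computation is the same). Your explicit explanation of the modulus-squared asymmetry between the two bases is, if anything, cleaner than the paper's, which contains a slip at the analogous point (``so $\Fid_J(U,V)=\tr(U^\dagger V)$'').
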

\begin{proof}
    To compute \jfid{} of two given $n$-qubit circuits $U$ and $V$ on $\mathcal{H}^{\otimes n}$,
one takes the maximally entangled state $\ket{\Psi_n}$ on $\mathcal{H}^{\otimes n}\otimes\mathcal{H}^{\otimes n}$ as input and compute the fidelity between the output states $(U\otimes I^{\otimes n}) \ket{\Psi_n}$ and $(V\otimes I^{\otimes n}) \ket{\Psi_n}$.
We first prove this in the \combasis{} and then move to \paulibasis.

In \combasis{}, the \jfid{} is given by
\[
\begin{aligned}
\Fid_J(U,V) &= \Fid((U\otimes I^{\otimes n})\ket{\Psi_n}, (V\otimes I^{\otimes n})\ket{\Psi_n}) =
\left|\bra{\Psi_n}(U^\dagger\otimes I^{\otimes n})(V\otimes I^{\otimes n})\ket{\Psi_n}\right|^2\\
& = \left|\frac{1}{2^n}\cdot\sum_{b\in \{0,1\}^n}\sum_{b'\in \{0,1\}^n}\bra{i}U^\dagger V\ket{j}\cdot \bra{b}I^{\otimes n}\ket{b'}\right|^2 =
\left|\frac{1}{2^n}\cdot\sum_{b\in \{0,1\}^n}\bra{b}U^\dagger V\ket{b}\right|^2,\\
\end{aligned}
\]
so $\Fid_J(U,V) = \tr(U^\dagger V)$.
Based on \autoref{lem:coefficient}, it holds that:
\[
\begin{aligned}
\#SAT_W(F_{U^\dagger V}(\vec q, \vec q\,')\wedge(\vec q\Leftrightarrow \vec q\,')) &= \sum_{b\in \{0,1\}^n} \#SAT_W(F_{\ket{b}}(\vec q) \wedge F_{U^\dagger V}(\vec q, \vec q\,') \wedge F_{\ket{b}}(\vec q\,')) \\ &=
\sum_{b\in \{0,1\}^n}\bra{b}U^\dagger V\ket{b},
\end{aligned}
\]
giving us that $\Fid_J(U,V) = \frac{1}{4^n}\left| \#SAT_W(F_{U^\dagger V}(\vec q, \vec q\,')\wedge(\vec q\Leftrightarrow \vec q\,'))\right|^2$ as we wanted to show.

Moving to \paulibasis{}, we first represent $\ket{\Psi_n}$ and the \jfid{} in the Pauli basis, and then explain the encoding.
The density operator of the maximally entangled state can be decomposed in the Pauli basis as:
$
\ketbra{\Psi} = 
\frac{1}{2^n}\sum_{i\in[n],j\in[n]}\ketbra{ii}{jj} = \frac{1}{4^n}\sum_{\mathcal{P}_i\in\{X,Y,Z,I\}^{\otimes n}} \mathcal{P}_i\otimes \mathcal{P}^T_i,
$
which is shown in \cite{chen2021quantumaf}.
The \jfid{} can be defined in the Pauli basis as:
    \begin{equation}\label{eq:fidpauli}
    \begin{aligned}
        \Fid_J&(U^\dagger V,I) = 
        \Tr((U^\dagger V\otimes I)\ketbra{\Psi}(V^\dagger U\otimes I)\ketbra{\Psi}) \\
        & = \frac{1}{16^n}\sum_{j,k\in[4^n]}\Tr(U^\dagger V\mathcal{P}_j V^\dagger U \mathcal{P}_k \otimes \mathcal{P}^T_j\mathcal{P}^T_k) = \frac{1}{8^n}\sum_{j\in[4^n]}\Tr(U^\dagger V\mathcal{P}_j V^\dagger U \mathcal{P}_j)
    \end{aligned}
    \end{equation}
    Based on Proposition 1 in \cite{mei2024eq},
    each of the summand $\frac{1}{2^n}\cdot\Tr(U^\dagger V\mathcal{P}_j V^\dagger U \mathcal{P}_j)$ can be computed by the weighted model counting of $F_{\mathcal{P}_j}(\vec q)\wedge F_{U^\dagger V}(\vec q, \vec q\,')\wedge F_{\mathcal{P}_j}(\vec q\,')$,
    or equally,
    $F_{\mathcal{P}_j}(\vec q)\wedge F_{U^\dagger V}(\vec q, \vec q\,')\wedge (\vec q \Leftrightarrow \vec q\,')$.
    To compute the summation, 
    one should go over all possible Pauli strings,
    which is equal to setting the variables in $\vec q$ to be free.
    Hence, the Jamiolkowski fidelity is obtained by $\Fid_J(U,V) = \frac{1}{4^n}\#SAT_W(F_{U^\dagger V}(\vec q\,^0, \vec q\,^m)\wedge (\vec q\,^0 \Leftrightarrow \vec q\,^m))$.
\end{proof}
Let $\circuit_1$, $\circuit_2$ be two circuits,
since $\Fid_J(\circuit_1,\circuit_2) = \Tr(\circuit_1^\dagger \circuit_2) = \Tr(\circuit_2\circuit_1^\dagger) = \Fid_J(I, \circuit_2\circuit_1^\dagger)$, 
and based on \autoref{prop:fidwmc}, one can reuse   \autoref{eq:syn_final_equation} to compute $\Fid_J(\circuit_1, \circuit_2)$.

\begin{proposition}\label{prop:approx-syn}
    Given a quantum circuit $\circuit_1$ on $n$ qubits, an integer $d$ and an error bound $\epsilon\in(0,1]$,
    there exists a $d$-depth $n$-qubit circuit $\circuit_2$ such that $U_{\circuit_2}\simeq_\epsilon U_{\circuit_1}$ iff
    $
    Max\#SAT_W\bigl(Syn_{\mathbf{C}, \mathbf{B}, \mathcal G,\mathcal C_1, d}(P,Q)\bigr) = (c, \tau(P)),
    $
    where $\frac{1}{2^n}|c| > 1-\epsilon$ if $\textbf B = \combasis{}$ and $\frac{1}{4^n}c>1-\epsilon$ if $\textbf B = \paulibasis$.
    The circuit $\mathcal C_2$ can be determined by the satisfying assignment $\tau(P)$.
\end{proposition}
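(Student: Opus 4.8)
The plan is to reduce the proposition to \autoref{prop:fidwmc} by first identifying, for every fixed assignment to the gate-selecting block $P$, what $Syn_{\mathbf{C}, \mathbf{B}, \mathcal{G},\mathcal{C}_1, d}(P,Q)$ computes. First I would fix an arbitrary assignment $\tau(P)$ and note that $F_{\mathtt{EXO}}(\vec p\,^t)$ forces, at each time step $t\in[d]$ and each qubit, exactly one gate-selecting variable to be true; hence $\tau(P)$ determines a valid layer $D^t\in\mathcal{D}_n$, and conversely every depth-$d$ circuit over $\mathcal{G}$ is named by exactly one such assignment. Under $\tau(P)$, each implication $p_{G,i}\Rightarrow F_G$ with a false premise is vacuous while the unique implication per qubit with a true premise contributes $F_G$, so $F_{\mathcal{D}, \mathcal{G}}(\vec q\,^t,\vec q\,^{t+1},\vec p\,^t)$ collapses to the layer encoding $F_{D^t}(\vec q\,^t,\vec q\,^{t+1})$. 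Conjoining over $t$ together with $F_{U_{\mathcal{C}_1}^\dagger}(\vec q,\vec q\,^0)$ and $(\vec q\Leftrightarrow\vec q\,^d)$, the residual of the synthesis formula is exactly the cyclic encoding $F_{U_{\mathcal{C}_2}U_{\mathcal{C}_1}^\dagger}(\vec q,\vec q\,^d)\wedge(\vec q\Leftrightarrow\vec q\,^d)$ for the circuit $\mathcal{C}_2=(D^0,\dots,D^{d-1})$ picked out by $\tau(P)$.

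Next I would apply \autoref{prop:fidwmc} with $U=I^{\otimes n}$ and $V=U_{\mathcal{C}_2}U_{\mathcal{C}_1}^\dagger$: the weighted model count of this residual equals $4^n\cdot\Fid_J(I,U_{\mathcal{C}_2}U_{\mathcal{C}_1}^\dagger)$ in \paulibasis, while its squared modulus equals $4^n\cdot\Fid_J(I,U_{\mathcal{C}_2}U_{\mathcal{C}_1}^\dagger)$ in \combasis; the trace identity recorded right after \autoref{prop:fidwmc} rewrites this as $\Fid_J(U_{\mathcal{C}_1},U_{\mathcal{C}_2})$. Thus, as $\tau(P)$ ranges over assignments, the inner weighted count tracks precisely the Jamio\l{}kowski fidelity between the specification and the candidate circuit it names. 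Since $\maxwmc$ maximizes this count over the free variables $Q$ and reports a witnessing $\tau(P)$, its returned value $c$ corresponds to a fidelity-optimal candidate, from which $\mathcal{C}_2$ is recovered. To close the ``iff'', I would invoke \autoref{def:approx_eq}: a depth-$d$ circuit with $U_{\mathcal{C}_2}\simeq_\epsilon U_{\mathcal{C}_1}$ exists iff $\max_{\mathcal{C}_2}\Fid_J(U_{\mathcal{C}_1},U_{\mathcal{C}_2})\ge 1-\epsilon$, which by the previous step is the condition on $c$ stated in the proposition for each basis; the ``only if'' direction holds because $\maxwmc$ always attains its maximum at some concrete circuit.

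The step I expect to need the most care is the \combasis{} bookkeeping: there the inner count $c_\tau$ is complex (in particular possibly negative), and the meaningful quantity is $\frac{1}{4^n}|c_\tau|^2$, so ``maximizing the weighted model count'' has to be read through $|c_\tau|$, and the $\epsilon$-threshold must be tracked carefully through the modulus-and-square so that it matches the $\ge 1-\epsilon$ form of \autoref{def:approx_eq}. A secondary point is that $\maxwmc$ returns \emph{some} optimum unconditionally, so the displayed equality-with-side-condition should be understood as ``the optimal normalized value exceeds $1-\epsilon$''; this is routine once the per-assignment identification with $\Fid_J$ above is in place, and needs no quantum-information input beyond \autoref{prop:fidwmc}, the trace identity, and the layer encoding of \autoref{eq:syn}.
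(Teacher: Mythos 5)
Your proposal is correct and follows essentially the same route as the paper, which justifies this proposition only by the one-line remark that $\Fid_J(\circuit_1,\circuit_2)=\Fid_J(I,\circuit_2\circuit_1^\dagger)$ together with \autoref{prop:fidwmc} applied to the cyclic synthesis formula of \autoref{eq:syn_final_equation}; your per-assignment collapse of the gate-selecting block to a concrete layer sequence is the (unstated) glue the paper also relies on via \autoref{prop:syn-exact}. Your flagged concern about the \combasis{} bookkeeping is well placed: the paper's stated threshold $\frac{1}{2^n}|c|>1-\epsilon$ corresponds to $\Fid_J=(\frac{|c|}{2^n})^2>(1-\epsilon)^2$ rather than $\geq 1-\epsilon$, a looseness in the proposition's statement itself rather than in your argument.
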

Like in exact synthesis, to get the depth-optimal $\epsilon$-approximate circuit,
we apply the above \autoref{prop:approx-syn} for an increasing depth $d$. \autoref{ex:fidelity} demonstrates approximate synthesis.

\begin{example}\label{ex:fidelity}
Consider the circuit $\mathcal C_1 = (R_Z(\frac{\pi}{8}))$ with $\epsilon = 0.05$.
Synthesizing one gate layer, we get the result as
\[
Max\#SAT_W(Syn_{\mathbf{B}, \paulibasis,\mathcal G,\mathcal C_1, 1}(P,Q)) = (3.848, \{ p_{H,0}^0\leftarrow 0, p_{T,0}^0\leftarrow 1, p_{I,0}^0\leftarrow 0\}),
\]
which determine the output circuit as $\circuit_2 = (T)$ and the fidelity $\Fid_J(U_{\circuit_1}, U_{\circuit_2}) = \frac{1}{4}\times 3.848 = 0.962 > 1- \epsilon$.
Thus, the synthesis procedure stops, and the output circuit is $(T)$.
\end{example}
 \section{Related Work}
\label{sec:related}

\textbf{Clifford circuit synthesis.}
Synthesis of Clifford circuits is substantially simpler than that of universal quantum circuits due to the ability to exploit the algebraic structure of the symplectic group, which significantly constrains the search space. Any Clifford operation on $n$ qubits can be efficiently represented by a $2n \times 2n$ symplectic matrix over $\mathbb{F}_2$, rather than the exponentially larger $2^n \times 2^n$ unitary matrix typically required for general quantum operations \cite{calderbank1997qec}.
Building on this, Maslov and Roetteler~\cite{maslov2018bruhat} employ the Bruhat decomposition of the symplectic group to generate shorter Clifford circuits. 
Similarly, Rengaswamy et al.~\cite{rengaswamy2018synthesis} develop Clifford synthesis algorithms via symplectic geometry, targeting logical-level Clifford operations with an emphasis on practical implementations on physical qubits.
While these approaches produce efficient Clifford circuits, they do not guarantee optimality in terms of depth or gate count. 
To address this limitation, Schneider et al.~\cite{schneider2023sat} reformulate Clifford synthesis as a satisfiability problem.
This encoding enables the use of SAT and MaxSAT solvers~\cite{biere2009handbook} to identify optimal Clifford circuits for a fixed depth, offering a rigorous method to achieve minimality.

\textbf{Exact Clifford+T circuit synthesis.}
In error-corrected quantum computing, the relevant universal gate is Clifford+T.
There are many works considering the exact synthesis of quantum circuits in the gate set Clifford+T\cite{amy2013mitm,matteo2016parallel,giles2013exact,gosset2013algorithmtcount,kliuchnikov2013fast,niemann2020advanced}, i.e., the desired specification is realized without any rounding errors.
Approaches like \cite{giles2013exact,niemann2020advanced} synthesize the unitary matrix representing the specification in a local fashion, i.e., column by column.
However, they do not give the optimal solution and leave significant room for improvement.
To achieve optimality, 
the meet-in-the-middle algorithm~\cite{matteo2016parallel,gosset2013algorithmtcount} performs an exhaustive search over the space of all Clifford+T circuits up to a given depth.

\textbf{Approximate Clifford+T circuit synthesis.}
Since not all unitaries can be implemented exactly in the Clifford+T gate set,
other works have focused on synthesizing circuits under different approximation metrics. Most of these works \cite{kliuchnikov2016practical,ross2016ancillafree,selinger2014efficient,fowler2011steane} focus on single-qubit operators, especially rotation gates, while \cite{gheorghiu2022tcount,paradis2024synthetiq} consider multi-qubit operators. \section{Experimental Evaluation}

\subsection{Implementation}
To evaluate and test our proposed method, we implemented it in a tool called \texttt{Quokka\#-syn}, as part of the open source toolkit \texttt{Quokka\#}\footnote{\url{https://github.com/System-Verification-Lab/Quokka-Sharp}}.
In addition to the core method described in the previous sections, we also implemented several optimization rules. 
Since no existing tool can solve our problem, we extended the MWMC solver d4Max~\cite{audemard2022maxsat} with support for negative and complex weights. The following paragraphs describe these two components in more detail.

\textbf{Encoding optimization.}
While the encoding in \autoref{cons:gate-layer} allows for any circuit of depth $d$, many encoded circuits are redundant as they fall in the same equivalence class.
For example, two consecutive Hadamard gates can be reduced to the identity.
So if the $H$ is applied to the $j$-th qubit at depth $t$,
we can safely exclude the case where another $H$ is applied to the same qubit at depth $t+1$. 
In the encoding, this corresponds to enforcing that if $p_{H_{j,t}}$ is set to \true, then $p_{H_{j,t+1}}$ must be \false,
which can be encoded as $\bigwedge_{k\in[d-1]}\bigwedge_{i\in[n]}(\overline{p}_{H,i}^{k}\vee\overline{p}_{H,i}^{k+1})$.
We apply similar reasoning to patterns like $T^8 = I$ and $CX_{i,j}CX_{i,j} = I^{\otimes 2}$, which are never part of an optimal circuit. Eliminating such cases does not affect optimality and helps reduce the search space. We introduce additional encodings to prune such redundant structures, as detailed in \aref{app:opt}.

\textbf{d4Max extension.} To support the encodings, we first extended d4Max to support negative weights (for \paulibasis) and complex weights (for \combasis). For number representation, we use arbitrary precision arithmetic from the GMP library, as in the original version of d4Max.
Because of the negative weights, computing an upper bound for a sub-formula becomes more complex.
In the original version of d4Max, upper bounds can be easily computed by assuming the formula is a tautology. This allows the solver to prune branches of the search tree when it is clear that no better value can be reached than the current best solution. As this is non-trivial to resolve in the presence of negative weights, we turned this optimization off.
Another removed feature is the ability to compute intermediate approximations before processing all connected components. As a result of these changes, the current version is slower at providing intermediate solutions compared to the original version.

\subsection{Performance}
In this section, we explore the performance of our method as implemented in \texttt{Quokka\#-syn}. Our implementation synthesizes circuits with the gate set $\{H, CX, T, T^\dagger\}$. This set is also universal since $S=TT$,
where we replace the $S$ gate with the $T^\dagger$ gate.

We demonstrate the feasibility and scalability of our method based on two classes of benchmarks for exact synthesis: random circuits on $\{H, CX, T, T^\dagger\}$ and commonly used unitary operators such as the $CH$ and $\toffoli$ gates.
We use randomly generated circuits as benchmarks, as they are the standard method for evaluating performance since they represent a hard case, which is also employed in practice to prove quantum supremacy \cite{arute2019quantum}.

Since our method guarantees a depth-optimal result, we compare with the state-of-the-art method \texttt{mitms}~\cite{amy2013mitm}, which targets the same task.
Our experiments were run on a single-core AMD Ryzen 9 7900X Processor and 64 GB of memory.

\textbf{Exact synthesis.}
In the first experiment, we compare the performances of the different encodings to determine the best one and assess the scalability of the method, as this determines its effectiveness in practice~\cite{amy2013mitm}. We test all three synthesis encodings for circuit inputs: \combasis with the cyclic encoding and \paulibasis with the cyclic and linear-cyclic encodings.

\begin{table}[t!]
    \caption{Synthesis benchmarks for random circuits. 
    \combasis and \paulibasis indicate cyclic encoding in each basis; L\paulibasis uses linear-cyclic encoding in \paulibasis. Memory is reported by d4max; for short runtimes, usage is below 2 GB and denoted by $<2$. Averages include $\pm$ standard deviation. \textbf{-} indicates all 10 samples failed; $\circ$ marks untested depths due to low prior success.}
    \label{tab:syn_statistics}
    \centering
\scalebox{.6}{
\begin{tabular}{ll|lll|lll|lll}
\toprule
\multirow[c]{2}{*}{\#Qb} & \multirow[c]{2}{*}{Depth} & \multicolumn{3}{c}{Rate} & \multicolumn{3}{c}{Time (s)} & \multicolumn{3}{c}{Memory (GB)} \\
 &  & \combasis & L\paulibasis & \paulibasis & \combasis & L\paulibasis & \paulibasis & \combasis & L\paulibasis & \paulibasis \\
\midrule
\multirow[t]{8}{*}{2} & 1 & 1.0 & 1.0 & 1.0 & $0.05\pm0.00$ & $0.05\pm0.00$ & $0.05\pm0.00$ & $<2$ & $<2$ & $<2$ \\
 & 2 & 1.0 & 1.0 & 1.0 & $0.07\pm0.00$ & $0.08\pm0.00$ & $0.08\pm0.00$ & $<2$ & $<2$ & $<2$ \\
 & 3 & 1.0 & 1.0 & 1.0 & $0.11\pm0.01$ & $0.11\pm0.01$ & $0.14\pm0.02$ & $<2$ & $<2$ & $<2$ \\
 & 4 & 1.0 & 1.0 & 1.0 & $0.29\pm0.09$ & $0.22\pm0.07$ & $0.40\pm0.10$ & $<2$ & $<2$ & $<2$ \\
 & 5 & 1.0 & 1.0 & 1.0 & $2.46\pm1.51$ & $1.46\pm0.72$ & $4.53\pm3.16$ & $2.13\pm0.03$ & $2.09\pm0.00$ & $2.11\pm0.02$ \\
 & 6 & 1.0 & 1.0 & 1.0 & $21.25\pm12.54$ & $14.26\pm10.33$ & $55.69\pm42.54$ & $2.46\pm0.28$ & $2.16\pm0.07$ & $2.38\pm0.27$ \\
 & 7 & 0.9 & 0.9 & 0.3 & $115.31\pm64.99$ & $97.28\pm46.98$ & $206.37\pm17.36$ & $3.84\pm1.31$ & $2.48\pm0.21$ & $3.41\pm0.54$ \\
 & 8 & 0.0 & 0.0 & $\circ$ & \textbf{-} & \textbf{-} & $\circ$ & \textbf{-} & \textbf{-} & $\circ$ \\
\cline{1-11}
\multirow[t]{6}{*}{3} & 1 & 1.0 & 1.0 & 1.0 & $0.05\pm0.00$ & $0.05\pm0.00$ & $0.05\pm0.00$ & $<2$ & $<2$ & $<2$ \\
 & 2 & 1.0 & 1.0 & 1.0 & $0.08\pm0.01$ & $0.10\pm0.01$ & $0.10\pm0.02$ & $<2$ & $<2$ & $<2$ \\
 & 3 & 1.0 & 1.0 & 1.0 & $1.43\pm0.97$ & $0.44\pm0.24$ & $2.45\pm1.91$ & $2.11\pm0.02$ & $<2$ & $2.11\pm0.01$ \\
 & 4 & 1.0 & 1.0 & 1.0 & $51.56\pm57.11$ & $11.26\pm4.95$ & $105.40\pm64.98$ & $3.15\pm1.25$ & $2.15\pm0.04$ & $2.87\pm0.67$ \\
 & 5 & 0.1 & 0.7 & 0.0 & $145.96$ & $147.37\pm49.30$ & \textbf{-} & $5.08$ & $3.20\pm0.53$ & \textbf{-} \\
 & 6 & $\circ$ & 0.0 & $\circ$ & $\circ$ & \textbf{-} & $\circ$ & $\circ$ & \textbf{-} & $\circ$ \\
\cline{1-11}
\multirow[t]{4}{*}{4} & 1 & 1.0 & 1.0 & 1.0 & $0.05\pm0.00$ & $0.06\pm0.00$ & $0.05\pm0.00$ & $<2$ & $<2$ & $<2$ \\
 & 2 & 1.0 & 1.0 & 1.0 & $0.37\pm0.20$ & $0.30\pm0.14$ & $0.62\pm0.48$ & $2.09$ & $<2$ & $2.09\pm0.01$ \\
 & 3 & 1.0 & 1.0 & 0.5 & $102.55\pm85.73$ & $22.65\pm16.03$ & $85.70\pm42.76$ & $4.44\pm2.07$ & $2.26\pm0.13$ & $2.69\pm0.34$ \\
 & 4 & 0.0 & 0.2 & 0.0 & \textbf{-} & $246.80\pm21.38$ & \textbf{-} & \textbf{-} & $4.51\pm0.20$ & \textbf{-} \\
\cline{1-11}
\multirow[t]{3}{*}{5} & 1 & 1.0 & 1.0 & 1.0 & $0.05\pm0.00$ & $0.06\pm0.01$ & $0.06\pm0.00$ & $<2$ & $<2$ & $<2$ \\
 & 2 & 1.0 & 1.0 & 1.0 & $2.93\pm2.17$ & $3.01\pm1.62$ & $11.85\pm11.15$ & $2.14\pm0.04$ & $2.11\pm0.02$ & $2.18\pm0.10$ \\
 & 3 & 0.0 & 0.1 & 0.0 & \textbf{-} & $243.71$ & \textbf{-} & \textbf{-} & $4.67$ & \textbf{-} \\
\cline{1-11}
\multirow[t]{3}{*}{6} & 1 & 1.0 & 1.0 & 1.0 & $0.05\pm0.00$ & $0.12\pm0.06$ & $0.06\pm0.00$ & $<2$ & $<2$ & $<2$ \\
 & 2 & 0.9 & 1.0 & 0.7 & $70.49\pm82.13$ & $54.47\pm44.75$ & $72.71\pm69.85$ & $3.74\pm2.03$ & $2.69\pm0.61$ & $2.80\pm0.81$ \\
 & 3 & 0.0 & 0.0 & 0.0 & \textbf{-} & \textbf{-} & \textbf{-} & \textbf{-} & \textbf{-} & \textbf{-} \\
\cline{1-11}
\bottomrule
\end{tabular}
}
\end{table}

 Random circuits are generated layer by layer, with each layer applying exactly one gate per qubit. Gates are selected uniformly at random from the gate set, excluding the identity gate, and assigned to unassigned qubits. Since a random circuit of depth $d$ could be equivalent to a shallower circuit, we retain only those circuits that cannot be synthesized with a depth less than $d$, until we obtain 10 samples. 
Using these samples, we then evaluated all three encodings for increasing depths, stopping when fewer than half the circuits are solved within the time limit per instance. We set the time limit to 300 seconds to accommodate a large number of benchmarks. 
Experiments were run on circuits with 2 to 6 qubits; the 1-qubit case is omitted here as it is evaluated for approximate synthesis below.

The results are shown in \autoref{tab:syn_statistics}. For each encoding, qubit count, and depth, we report the success rate (the fraction of random circuit samples that can be solved within the time limit), the average runtime of solved cases, and the average memory needed by d4Max for the solved cases.
The results highlight the significant impact of the qubit count and depth on the runtime. For every number of qubits, there is a specific depth threshold where problem-solving becomes challenging; before reaching this point, the success rate is perfect.
We observe that the linear encoding consistently demonstrates the best performance among the three encoding methods.

\renewcommand{\arraystretch}{1.1}
\begin{table}[t!]
\caption{Experimental results on synthesis for controlled-gates. We present the running time and memory usage reported by d4Max, as the encoding time is less than 0.01 seconds. We report N/A when d4Max does not give the memory usage, denoted by $<2000$. Here '-' denotes timeout cases except for the case \texttt{mitms} with 2-qubit and 8-depth, where the data is not given in \cite{amy2013mitm}.}
\label{tab:unitary}
\centering
\scalebox{0.6}{
\begin{tabular}{|llcl|r|r|r|r|r|r|r|r|}
\hline
\multicolumn{4}{|l|}{\#Qubits \textbackslash Depth}                                                                                                                                          & 1     & 2     & 3       & 4        & 5         & 6       & 7        & 8         \\ \hline
\multicolumn{1}{|l|}{\multirow{6}{*}{2}} & \multicolumn{1}{l|}{\multirow{4}{*}{\texttt{Quokka\#-syn}}} & \multicolumn{1}{l|}{\multirow{2}{*}{\begin{tabular}[c]{@{}l@{}}Cyclic \\ (unitary)\end{tabular}}} & Time (s) & 0.023 & 0.025 & 0.059   & 0.357    & 3.890      & 29.861  & 436.450   & -         \\ \cline{4-12} 
\multicolumn{1}{|l|}{}                    & \multicolumn{1}{l|}{}                            & \multicolumn{1}{l|}{}                                                                             & Mem (MB) & $<2000$   & $<2000$   & $<2000$     & $<2000$      & 2200.990   & 2767.230 & 10666.570 & -         \\ \cline{3-12} 
\multicolumn{1}{|l|}{}                    & \multicolumn{1}{l|}{}                            &\multicolumn{1}{l|}{\multirow{2}{*}{\begin{tabular}[c]{@{}l@{}}Linear-\\cyclic \end{tabular}}}                                  & Time (s) & 0.023 & 0.028 & 0.053   & 0.270     & 2.337     & 6.470    & 197.810   & 2353.696 \\ \cline{4-12} 
\multicolumn{1}{|l|}{}                    & \multicolumn{1}{l|}{}                            & \multicolumn{1}{l|}{}                                                                             & Mem (MB) & $<2000$   & $<2000$   & $<2000$     & $<2000$      & 2148.300   & 2183.760 & 4814.550  & 16091.660  \\ \cline{2-12} 
\multicolumn{1}{|l|}{}                    & \multicolumn{2}{l|}{\multirow{2}{*}{\texttt{\mitms}}}                                                                                          & Time (s) & 0.002 & 0.019 & 0.188   & 1.602   & 12.433    & 84.622  & -        & -         \\ \cline{4-12} 
\multicolumn{1}{|l|}{}                    & \multicolumn{2}{l|}{}                                                                                                                                & Mem (MB) & 0.002 & 0.016 & 0.147   & 1.013    & 6.249     & 32.766  & -        & -         \\ \hline
\multicolumn{1}{|l|}{\multirow{6}{*}{3}} & \multicolumn{1}{l|}{\multirow{4}{*}{\texttt{Quokka\#-syn}}} & \multicolumn{1}{l|}{\multirow{2}{*}{\begin{tabular}[c]{@{}l@{}}Cyclic \\ (unitary)\end{tabular}}} & Time (s) & 0.024 & 0.106 & 3.310    & 177.220   & -         & -       & -        & -         \\ \cline{4-12} 
\multicolumn{1}{|l|}{}                    & \multicolumn{1}{l|}{}                            & \multicolumn{1}{l|}{}                                                                             & Mem (MB) & $<2000$   & $<2000$   & 2188.920 & 5398.730  & -         & -       & -        & -         \\ \cline{3-12} 
\multicolumn{1}{|l|}{}                    & \multicolumn{1}{l|}{}                            & \multicolumn{1}{l|}{\multirow{2}{*}{\begin{tabular}[c]{@{}l@{}}Linear-\\cyclic \end{tabular}}}                                 & Time (s) & 0.029 & 0.140  & 3.413   & 76.293   & 3066.258 & -       & -        & -         \\ \cline{4-12} 
\multicolumn{1}{|l|}{}                    & \multicolumn{1}{l|}{}                            & \multicolumn{1}{l|}{}                                                                             & Mem (MB) & $<2000$   & $<2000$   & 2148.740 & 2688.360  & 18860.120  & -       & -        & -         \\ \cline{2-12} 
\multicolumn{1}{|l|}{}                    & \multicolumn{2}{l|}{\multirow{2}{*}{\texttt{\mitms}}}                                                                                          & Time (s) & 0.027 & 1.409 & 53.238    & 2311.023 & -         & -       & -        & -         \\ \cline{4-12} 
\multicolumn{1}{|l|}{}                    & \multicolumn{2}{l|}{}                                                                                                                                & Mem (MB) & 0.005 & 0.179 & 6.737   & 215.970   & -         & -       & -        & -         \\ 
\hline
\end{tabular}
}
\end{table}
 
The second experiment compares the performance of \texttt{Quokka\#-syn} with both unitary and circuit inputs to that of \texttt{mitms}, a tool that performs the same task,
only with target gate set $\{H,CNOT, S, S^\dagger, T, T^\dagger\}$.
For comparison, we selected the 2-qubit $CH$ gate and the 3-qubit $\toffoli$ gate. We encode the gates once as unitary matrices in \combasis, as described in \autoref{sec:exactsyn}, and once as Clifford+T circuits, as described in~\cite{nielsen2000quantum}, in \paulibasis.
We choose to use \combasis encoding for unitary inputs, since even for sparse unitaries, such as the Toffoli gate and its generalizations, the encoding of the Pauli decomposition may blow up, as shown in~\cite{mei2024eq}.
For the circuit encoding, we use the \emph{linear-cyclic encoding} in \paulibasis, as it showed the best performance in the previous experiment.
For both encodings, the timeout is set to one hour (including the encoding time and calling of d4Max for all depths). 
Since we failed to compile \texttt{mitms} within our experimental setup due to its reliance on older libraries, we instead refer to the performance data reported in~\cite{amy2013mitm} to provide a comparative reference.
The results are shown in \autoref{tab:unitary}. While the comparison is not entirely fair ---due to differences in platform, target gate set, and benchmark design--- we observe that \texttt{Quokka\#-syn} tends to exhibit higher memory usage but lower runtime compared to \texttt{mitms}. 

It is also worth noting that the performance of \paulibasis with the linear-cyclic encoding given in \autoref{tab:syn_statistics} are better than \autoref{tab:unitary} in many cases, for example, $3$-qubit and $5$-depth cases in \autoref{tab:syn_statistics} feature 70\% success rate with average runtime of 147.37 seconds, while in \autoref{tab:unitary}, it takes 3066.26 seconds.
One explanation is that, since the maximal weight is known, as shown in \autoref{prop:exact-eqcheck}, 
\texttt{Quokka\#-syn} provides a threshold to the solver, allowing it to terminate as soon as the optimal value is achieved.
Consequently, if a circuit can be successfully synthesized within the given depth, as with the solved cases in \autoref{tab:syn_statistics}, the tool completes early. In contrast, for unsuccessful cases, such as for each depth in \autoref{prop:exact-eqcheck},
the solver must explore the entire search space, resulting in longer run times.
In addition, we observe substantial variability in the run times, indicating that the performance is highly dependent on the specific characteristics of each case.

\begin{wraptable}{r}{0.55\textwidth}
\centering
\vspace{-1em}
\caption{$R_Z(\pi/8)$ synthesis using \combasis with cyclic encoding. Statistics are shown per synthesis layer. Memory usage under short runtimes is omitted ($<2$). Only fidelity-improving iterations are shown. At 24 layers, the program crashed due to resource limits (\textbf{-}).}
\label{tab:synRZ}

\scalebox{0.7}{
\begin{tabular}{l||r|r|r|r}
\toprule
\bf depth    & 1 & 10 & 15 & 24\\                                         
\midrule
\bf \#variables             & 11 & 92 & 137 & 218 \\  
\hline
\bf \#clauses               & 37 & 382 & 577 & 928 \\  
\hline
\bf \#literals              & 90 & 960 & 1460 & 2360 \\  
\hline
\bf \#Selecting variables   & 4 & 40 & 60 & 96 \\
\hline
\bf fidelity            & 0.962 & 0.975 & 0.997 & \textbf{-} \\
\hline
\bf Time (s)            & 0.021 & 0.092 & 2.685 & >1560.51 \\   
\hline
\bf mem (GB)            & <2  & <2 & 2.2 & >30 \\
\bottomrule
\end{tabular}
}
\end{wraptable}
 \textbf{Approximate synthesis.}
Important gates that often need to be synthesized approximately in Clifford+T are the general rotation gates: $R_x$, $R_y$, $R_z$~\cite{kliuchnikov2016practical}.
Our method supports a general rotation angle in circuits, i.e., the rotation angle can be any real number.
The encoding will be the same Boolean formula, but with a weight function dependent on the rotation angle~\cite{mei2024simulating}.
Thus, different rotation angles do not significantly affect the performance per depth of the approximate synthesis of the rotation gate $R_z$.
Hence, to demonstrate the use of approximate synthesis, we consider the quantum gate $R_z(\frac{\pi}{8})=\begin{smat} {e^{-i\pi/16} & 0 \\ 0 & e^{i\pi/16}}\end{smat}$. We choose to use the \combasis with the \emph{cyclic encoding} as it outperforms the \paulibasis with the \emph{cyclic encoding}, according to the results in \autoref{tab:syn_statistics} (Recall that \paulibasis with the \emph{linear-cyclic encoding} cannot perform \emph{approximate} synthesis).
We report statistics for each synthesis layer where an improvement in the achieved fidelity is observed.
The results are presented in \autoref{tab:synRZ}.
The corresponding output circuits are as follows.
\begin{itemize}
    \item \textbf{1 Layer:} with $\epsilon = 0.1$,  $\mathcal{C}' = (T)$
    \item \textbf{10 Layer:} with $\epsilon = 0.1$, $\mathcal{C}' = (T^\dagger, H, T^\dagger, H, T^\dagger, H, T^\dagger, H, T^\dagger, H)$
    \item \textbf{15 Layer:} with $\epsilon = 0.01$, $\mathcal{C}' = (H, T^\dagger, H, T, H, T, H, T^\dagger, H, T, H, T, H, T^\dagger, H)$
\end{itemize}

Other tools also target approximate synthesis.
For example, the optimal tool \texttt{mitms} and other non-optimal tools such as \texttt{gridsynth}\cite{ross2016ancillafree} use operator norm as a metric, while \cite{meister2023machinesynthesis} uses the \jfid{}, though the implementation of the latter one is not open-source.
We do not include a performance comparison because, to the best of our knowledge, no available tool performs optimal approximate synthesis with fidelity as a metric.

\section{Conclusion}
The results presented in this work demonstrate that maximum weighted model counting can be effectively employed for both exact and approximate quantum circuit synthesis. This approach benefits from the generality and extensibility of weighted model counting techniques as used in circuit simulation and equivalence checking, particularly their compatibility with diverse gate sets and representation bases. However, in its current form, the maximum weighted model counting exhibits limited performance in this new context.

There are, nonetheless, several promising directions for further development:
(1)~The current implementation builds upon a prototype version of \texttt{d4Max}, which lacks several optimizations to accommodate the extension to negative and complex weights. The design of dedicated algorithms tailored to this application remains an open avenue for future research.
(2)~Our layer-by-layer synthesis approach could benefit from incremental solving, as used in bounded model checking~\cite{biere2021bounded,GuntherW14}. Although incremental approaches have been studied for sampling and weighted model counting~\cite{meel2022inc,yang2024towards}, their application to maximum weighted model counting has not yet been explored.
(3)~Since the inception of the Model Counting Competition,\footnote{\url{https://mccompetition.org/}} continuous progress has been observed in the capabilities of model counters. By contributing our benchmarks to the community, we anticipate further methodological improvements that may enhance the reliability and performance of this approach.
(4)~The particular characteristics in the encoded CNF, such as an abundance of XOR clauses~\cite{mei2024gapp}, are not yet exploited by the solver. This could lead to significant performance gains~\cite{DBLP:conf/ijcai/KoricheLMT13}.
(5)~Lastly, incorporating symmetry considerations may further improve the efficiency of model counting, as has been demonstrated in related domains~\cite{DBLP:conf/ecai/BartKLM14}.

The depth-optimal synthesis presented here enables more efficient implementations of multi-qubit gates, a critical requirement in fault-tolerant and error-corrected quantum computing~\cite{jones2013logicsynthesisfaulttolerantquantum}.
Future work will also address minimizing the $T$ count, which remains an important objective in error-corrected architectures~\cite{kliuchnikov2016practical}.
While the current study focuses on the Clifford+T gate set, the underlying encoding is readily generalizable to other gate sets. Moreover, extending maximum weighted model counting to the stochastic SAT setting~\cite{littman2001stochastic,lee2017solving} could broaden its applicability to further quantum circuit optimization problems.
This line of work may also lead to the derivation of novel lower bounds for computationally hard problems in quantum circuit analysis, based on established results from the reasoning and satisfiability domains~\cite{Bannach2024On}.

\newpage
\bibliography{bib2doi} 

\newpage
\appendix
\section{Appendix}\label{appendix}

\subsection{Proof of \autoref{prop:exact-eqcheck}}
\label{app:prof_exact}
\exacteqcheck*
\begin{proof}
First of all, from \cite[Cor.~1]{mei2024eq}, the circuit $\circuit$ is equivalent to the identity $I^{\otimes n}$ if and only if \autoref{eq:linear-encoding} holds.

Next, we show that all three equations \autoref{eq:linear-cyclic-encoding} and \autoref{eq:cyclic-encoding} in both bases are equivalent. 

For \autoref{eq:linear-cyclic-encoding}, as stated in Corollary 1 and the proof of Lemma 2 in \cite{mei2024eq}, we have 
\[
\wmc(F_{\mathcal{P}}(\vec{q}) \wedge F_\mathcal C\left(\vec{q}, \vec{q}' \right) \wedge F_{\mathcal{P}}(\vec{q}')) \leq 1.
\]
for all $\mathcal{P} \in \set{X_j, Z_j \mid j\in [n]}$.
Thus, we infer that
\[
\sum_{\mathcal{P}\in \set{X_j, Z_j \mid j\in [n]}}\wmc(F_{\mathcal{P}}(\vec{q}) \wedge F_\mathcal C\left(\vec{q}, \vec{q}' \right) \wedge F_{\mathcal{P}}(\vec{q}')) \leq 2n,
\]
where the value achieves $2n$ if and only if each of the summands achieves $1$.
Therefore \autoref{eq:linear-cyclic-encoding} is true if and only if \autoref{eq:linear-encoding} is true, as demonstrated in \cite[Prop. 1]{mei2024eq}.

For \autoref{eq:cyclic-encoding} in \paulibasis{}, the idea is similar.
The value of the weighted model count of \autoref{eq:cyclic-encoding} is equivalent to
\[
\sum_{\mathcal{P}\in\{X,Y,Z,I\}^{\otimes n}}\wmc(F_{\mathcal{P}}(\vec{q}) \wedge F_\mathcal C\left(\vec{q}, \vec{q} \right) \wedge F_{\mathcal{P}}(\vec{q})) \leq 4^n,
\]
which can achieve $4^n$ if and only if for all $4^n$ Pauli strings $\mathcal P$,
the weighted model counting of $\mathcal P$ achieves 1.
Since $\set{X_j, Z_j \mid j\in [n]}\subseteq \{X,Y,Z,I\}^{\otimes n}$,
we have $\autoref{eq:cyclic-encoding}\Rightarrow \autoref{eq:linear-encoding}$.
From \cite{thanos2023fast},
if two unitaries are equivalent over $\set{X_j, Z_j \mid j\in [n]}$,
they are equivalent over $\{X,Y,Z,I\}^{\otimes n}$.
We have \autoref{eq:linear-encoding}$\Rightarrow$ \autoref{eq:cyclic-encoding}. 
Therefore \autoref{eq:linear-encoding}$\Leftrightarrow$ \autoref{eq:cyclic-encoding}.

For \autoref{eq:cyclic-encoding} in \combasis{},
since
\[
\begin{aligned}
    U_\circuit = \lambda \cdot I _{2^n} 
    & \Leftrightarrow 
    \bra{b}U_\mathcal{\circuit}\ket{b} = \lambda \text{ for } b\in \{0,1\}^n
    \\
    & \Leftrightarrow 
    \sum_{b\in \{0,1\}^n}\bra{b}U_\mathcal{\circuit}\ket{b} = \lambda \cdot 2^n,
    \\
    & \Leftrightarrow 
    \sum_{b\in \{0,1\}^n}\Bigl(\#SAT_W\bigl(F_{\ket{b}}(\vec q) \wedge F_\circuit(\vec q, \vec q')\wedge F_{\ket{b}}(\vec q')\bigr)\Bigr) =\lambda \cdot 2^n,
    &&(\autoref{lem:coefficient})
    \\
    & \Leftrightarrow 
    \sum_{b\in \{0,1\}^n}\Bigl(\#SAT_W\bigl(F_{\ket{b}}(\vec q) \wedge F_\circuit(\vec q, \vec q')\wedge F_{I_{2^n}}(\vec q, \vec q')\bigr)\Bigr) =\lambda \cdot 2^n,
    \\
    & \Leftrightarrow \#SAT_W(F_\circuit(\vec q, \vec q')\wedge F_{I_{2^n}}(\vec q, \vec q')) =\lambda \cdot 2^n,
\end{aligned}
\]
where  $|\lambda|^2 = 1$,
we have \autoref{eq:cyclic-encoding} $\Leftrightarrow$ $\circuit\equiv I^{\otimes n}$.
\end{proof}

\subsection{Optimization rules}
\label{app:opt}

To optimize the synthesis encoding, we implement additional constraints.

The first set of rules ensures that we avoid redundant combinations of gates, such as $HH$, since they can be replaced with $I$ gates:

\begin{itemize}
    \item \textbf{Rule 1}: No two $H$ gates in a row on the same qubit, since $HH=I$:
    \[
    F_{R1}^d(P) = \bigwedge_{k\in[d-1]}\bigwedge_{i\in[n]}(\overline{p}_{H,i}^{k}\vee\overline{p}_{H,i}^{k+1})
    \]
    \item \textbf{Rule 2}: No $8$ $T$ gates in a row on the same qubit since $T^8=I$:
    \[
    F_{R2}^d(P) = \bigwedge_{k\in[d-7]}\bigwedge_{i\in[n]} \bigvee_{j\in[8]}\overline{p}_{T,i}^{k+j}
    \]
    \item \textbf{Rule 3}: No two $CX$ gates in a row on the same qubits since $CX_{i,j}\cdot CX_{i,j}=I$: 
    \[
    F_{R3}^d(P) = \bigwedge_{k\in[d-1]} \bigwedge_{i,j\in[n], j\neq i} (\overline{p}_{CX,i,j}^{k}\vee\overline{p}_{CX,i,j}^{k+1})
    \]
\end{itemize}

The second set of rules aims to have a canonical representation for a given set of gates. Our guideline is that every non-$I$ is pushed back as far as possible. This means not allowing any single qubit gate to follow an $I$ gate, other than $I$ itself. For two-qubit gates, we do not allow following $I$ on both qubits. 

\begin{itemize}
    \item \textbf{Rule 4}: No single qubit gates other than $I$ after an $I$ gate: 
    \[ 
    F_{R4}^d(P) = \bigwedge_{k\in[d-1]}\bigwedge_{i\in[n]}(p_{I,i}^{k} \Rightarrow (p_{I,i}^{k+1} \vee \bigvee_{j\in[n], j\neq i}(p_{CX,i,j}^{k+1} \vee p_{CX,j,i}^{k+1})))
    \]
    \item \textbf{Rule 5}: No $CX$ gate following $I$ gate on both qubits: 
    \[
    F_{R5}^d(P) = \bigwedge_{k\in[d-1]}\bigwedge_{i,j\in[n], j\neq i}(p_{CX,i,j}^{k+1} \Rightarrow (\overline{p}_{I,i}^{k} \vee \overline{p}_{I,j}^{k}))
    \]
\end{itemize} 
\end{document}